\documentclass[12pt, draftclsnofoot, onecolumn]{IEEEtran}
\IEEEoverridecommandlockouts    
\usepackage{cite}
\usepackage{amsmath,amssymb,amsfonts}
\usepackage{cite}
\usepackage{graphicx}
\usepackage{textcomp}
\usepackage{acronym}
\usepackage{xcolor}
\usepackage{subcaption} 
\usepackage{lipsum}     
\usepackage{algpseudocode}
\usepackage{lettrine}
\usepackage{amsthm}
\usepackage{tabularx}  
\newtheorem{lemma}{Lemma}
\usepackage{subfig}
\usepackage{subcaption}
\usepackage{algorithm}
\usepackage{amsmath}
\usepackage{algpseudocode}
\begin{document}
\acrodef{THz}[THz]{terahertz} 
\acrodef{THz-AP}[THz-AP]{terahertz-access point} 
\acrodef{THz-APs}[THz-APs]{terahertz-access points} 
\acrodef{VLC}[VLC]
{visible light communication}
\acrodef{VLC-AP}[VLC-AP]{visible light communication-access point} 
\acrodef{VLC-APs}[VLC-APs]{visible light communication-access points} 
\acrodef{APs}[APs]{access points}
\acrodef{SNR}[SNR]{signal-to-noise ratio}
\acrodef{$THz_s-AP$}[$THz_s-AP$]{terahertz sensing-access point}
\acrodef{$THz_c-AP$}[$THz_c-AP$]{terahertz communication-access point}
\acrodef{$VLC_c-AP$}[$VLC_c-AP$]{visible light communication-access point}
\acrodef{$VLC_c-APs$}[$VLC_c-APs$]{visible light communication-access points}
\acrodef{number of users}[number of users]{number of users}
\acrodef{$THz_c/VLC_c-AP$}[$THz_c/VLC_c-AP$]{terahertz/visible light communication-access point}
\acrodef{$THz_c$}[$THz_c$]{terahertz communication}
\acrodef{AP}[AP]{access point}
\acrodef{NC}[NC]{network controller}
\acrodef{$P_d$}[$P_d$]{probability of detection}
\acrodef{$FA_p$}[$FA_p$]{probability of false alarm}
\acrodef{$SC_p$}[$SC_p$]{sensing coverage probability}
\acrodef{$THz_s$}[$THz_s$]{terahertz sensing}
\acrodef{$THz_c$}[$THz_c$]{terahertz communication}
\acrodef{$VLC_c$}[$VLC_c$]{visible light communication}
\acrodef{ISAC}[ISAC]{integrated sensing and communication}
\acrodef{LoS}[LoS]{line-of-sight}
\acrodef{NLoS}[NLoS]{non line-of-sight}
\acrodef{SNR}[SNR]{signal-to-noise ratio}
\acrodef{SE}[SE]{spectral efficiency}
\acrodef{QoS}[QoS]{quality of service}
\acrodef{PSD}[PSD]{power spectral density}
\acrodef{SINR}[SINR]{signal-to-interference noise ratio}
\acrodef{$THz_c/VLC_c$}[$THz_c/VLC_c$]{terahertz/visible light communication}
\acrodef{MHCP-II}[MHCP-II]{matérn hard-core point process of type II} 
\acrodef{THz/VLC}[THz/VLC]{terahertz communication/visible light communication}
\acrodef{EE}[EE]{energy efficiency}

\title{Energy-Efficient THz Sensing with Hybrid THz/VLC Communication Under Human Blockage Effects}

\author{Hanshita Prabhakar, Neetu R.R, and Vivek Ashok Bohara
\thanks{The authors are from Indraprastha Institute of Information Technology Delhi (IIIT-D), New Delhi, India. email: \{hanshitap, neetur, vivek.b\}@iiitd.ac.in.}}
\maketitle

\maketitle
\vspace{-2cm}
\begin{abstract}
This paper presents an energy-efficient indoor system integrating \ac{THz} with \ac{VLC}. \ac{THz} communication offers ultra-high-capacity links but is limited by severe path loss, atmospheric absorption, and susceptibility to blockages. In contrast, \ac{VLC} provides robust, wide indoor coverage with illumination support, thereby enabling reliable, high-speed hybrid connectivity. To leverage their respective strengths, we propose a hybrid framework that integrates \ac{$THz_s-AP$} with hybrid \ac{$THz_c/VLC_c-AP$}, enabling reliable coverage and enhancing the \ac{EE} from an \ac{ISAC} perspective. We first perform optimal power allocation between the \ac{$THz_s-AP$} and \ac{$THz_c-AP$} to optimized the set of users served by the \ac{$THz_c-AP$} link, considering monostatic sensing performance metrics such as \ac{$P_d$}, \ac{$FA_p$} and \ac{$SC_p$} under the impact of human blockages are evaluated. Subsequently, the overall network power consumption is minimized via a mixed-integer linear programming (MILP) optimization that optimally selects the active \ac{$VLC_c-APs$} and assigns transmit powers.  Furthermore, extensive performance evaluations are conducted to analyze key metrics, including average energy efficiency, average spectral efficiency, average sensing rate, and average communication rate. Simulation results demonstrate that, under \ac{THz} sensing, most users are connected to the \ac{$THz_c-AP$} in the absence of blockages, whereas in the presence of blockages, the majority are served by the \ac{$VLC_c-APs$}. Overall, all users maintain reliable coverage with high \ac{EE}.
\end{abstract}

\vspace{-0.5cm}
\begin{IEEEkeywords}
\vspace{-0.3cm}
THz, VLC, MHCP-II, power allocation, ISAC, monostatic sensing, probability of detection, probability of false alarm, sensing coverage probability, power minimization, energy efficiency, spectral efficiency.
\end{IEEEkeywords}

\vspace{-0.5cm}
\section{Introduction}
 \IEEEPARstart{U}{nlike} fifth-generation (5G) networks that primarily operate in sub-6 GHz and millimeter (mm) bands, future wireless systems may extend into higher frequencies, notably the terahertz (\ac{THz}) band from 0.1–10 \ac{THz} and visible light communication (\ac{VLC}) spectra, alongside sub-300 GHz bands. This evolution enables multi-band architectures, where users with multi-band capable devices can simultaneously exploit diverse bands to enhance coverage, reliability, data rates, security, and energy efficiency (EE). The \ac{THz} band offers ultra-wide bandwidth, highly directional beams, and inherent link-level security, but suffers from severe path loss, susceptibility to blockage, and frequency-dependent molecular absorption. Conversely, \ac{VLC}, leveraging light sources for both illumination and communication, provides large bandwidth, immunity to electromagnetic interference, improved physical-layer security, and nil electromagnetic field (EMF) exposure, yet faces limitations from blockages, propagation losses, and device orientation\cite{10707308}. Together, \ac{THz} and \ac{VLC} serve as complementary enablers to sub-6 GHz and mmWave systems, supporting disruptive sixth-generation (6G) and beyond applications. Furthermore, their integration within multi-band architectures can enable seamless indoor–outdoor connectivity, dynamic spectrum utilization, and novel integrated sensing and communication (\ac{ISAC}) driven services. These advancements position \ac{THz} and \ac{VLC} as critical pillars in realizing the performance targets of next-generation wireless networks.

\ac{ISAC} has emerged as a fundamental enabler of 6G networks and is identified by both the International Telecommunication Union (ITU) \cite{ITU2023} and the Third Generation Partnership Project (3GPP) \cite{3GPP2023} as one of the six key use cases. \ac{ISAC} encompasses the joint design and implementation of communication and sensing functionalities within a unified system architecture, where a single base station is intrinsically capable of supporting both operations. In this framework, network sensing refers to the ability of the infrastructure to detect, localize, and characterize objects through parameters such as presence, shape, size, position, and velocity by exploiting transmitted and received radio signals Moreover, sensor fusion will be essential in delivering holistic sensing capabilities by combining network-derived data with information from heterogeneous sources, including environmental sensors (e.g., humidity and rainfall), localization tags, and device-embedded sensors such as cameras, light detection and ranging (LiDAR), and inertial measurement units.
 
\vspace{-0.4cm}
\subsection{Related Work}
In \cite{10711865}, the authors proposed a THz network placement optimization framework to ensure uniform data rate coverage in a hybrid \ac{THz/VLC} indoor environment, accounting for existing \ac{VLC-AP} positions and indoor layouts. The framework achieves locally optimal solutions with low complexity and is extended to sum rate maximization as a special case. Furthermore, the authors formulate a joint optimization of \ac{AP} assignment, subcarrier allocation, and power allocation to maximize handover-aware \ac{EE} under practical constraints, including blockages, interference, and \ac{QoS} requirements. Further, in \cite{9745789}, the authors have considered a hybrid \ac{THz/VLC} indoor wireless network, where \ac{THz-APs} deliver high-quality virtual reality content to users via THz links, while \ac{VLC-APs} provide precise indoor positioning services through \ac{VLC} links. A meta-reinforcement learning-based approach is employed to jointly optimize \ac{VLC-APs} selection for localization and \ac{THz-APs} association for data transmission, with the objective of maximizing overall network reliability. In \cite{karoti2022improving}, and \cite{10041414}, the authors propose an \ac{SINR} based link aggregation (LA) framework to improve heterogeneous light fidelity/wireless fidelity (LiFi/WiFi) network performance. The authors showed that dynamically distributing traffic based on real-time \ac{SINR} enhances user coverage, load balancing, and \ac{QoS}; the results were plotted for coverage probability, outage probability, and data rate. The authors in \cite{10757556} investigated the hybrid \ac{THz/VLC} communication system for an outdoor scenario and evaluated their performance through detailed bit error rate analysis. For the indoor \ac{THz} sensing, the authors in \cite{10622748},\cite{11059649},\cite{7562539}, and \cite{10437749} analyzed the sensing coverage probability, throughput, and sensing rate, providing a detailed trade-off between the average sensing rate and communication rate and the hardware implementations. \cite{10881940}, investigates indoor office (28, 60, 77 GHz) and outdoor vehicular (60, 77, 94 GHz) scenarios under object-only and object-plus-clutter cases. Results show that clutter impacts range doppler resolution, while higher frequencies improve sensing performance at the cost of increased path loss. \cite{behdad2022power} and \cite{10058989} investigate cell-free massive multiple-input output (MIMO) \ac{ISAC} systems, where \ac{APs} jointly serve users and detect targets using either reused downlink symbols or dedicated sensing symbols. A two-step design maximizes sensing \ac{SNR} and applies maximum a posteriori ratio test (MAPRT) detection, showing improved performance with dedicated resources. Moreover, it turns beam squint and beam split in mmWave/\ac{THz} MIMO into an advantage by using true time delay (TTD) lines for rapid, parallel direction finding. Together, these methods demonstrate \ac{ISAC} potential for efficient, reliable, and spectrum-aware joint communication and sensing in future networks. \cite{10439221}, offers an in-depth physical-layer perspective on THz-enabled \ac{ISAC}, laying out the fundamental challenges, key signal processing components, performance analysis, and future research directions necessary for realizing integrated sensing and communication in 6G and beyond. Furthermore, \cite{10633859},\cite{10438962}, and \cite{9737357} introduce a novel \ac{ISAC}-based beam alignment framework for \ac{THz} networks to tackle beam misalignment caused by \ac{LoS} blockages and user mobility, which severely impact coverage. By integrating advanced sensing into beam management, the approach enables intelligent environment-aware beam direction, significantly reducing misalignment and improving coverage. Moreover, it is fully compatible with existing 5G beam management architectures, ensuring both reliability and practical deployability in future THz networks. In \cite{shen2024energy} and \cite{zhang2020energy} focus on enhancing \ac{EE} in \ac{THz} communication systems through advanced resource allocation and beamforming strategies. One optimizes power allocation and user pairing in THz downlink non-orthogonal multiple access (NOMA) to boost \ac{SE} and \ac{EE}, while the other minimizes beam training overhead to reduce energy consumption without sacrificing alignment accuracy. Together, they demonstrate significant gains in EE, throughput, and latency over conventional schemes.

\vspace{-0.4cm}
\subsection{Motivation and Contribution}
{Prior research on hybrid VLC/radio frequency (RF) \cite{karoti2022improving}, \cite{khreishah2018hybrid} systems and hybrid \ac{THz}/\ac{VLC} \cite{10757556},\cite{10711865} architectures has primarily focused on communication performance, optimization strategies, and handover-aware \ac{EE}. The hybrid VLC/RF studies \cite{karoti2022improving}, \cite{khreishah2018hybrid} have provided solutions for moderate data-rate coverage and power optimization; however, the impact of blockage scenarios has largely been overlooked. Similarly, hybrid THz/VLC works \cite{10757556},\cite{10711865}  have mainly analyzed system performance in terms of user sum rates and coverage, while the effects of blockages, joint power optimization, and the sensing capabilities enabled by THz access points remain insufficiently explored. Consequently, integrating THz sensing functionality with hybrid THz/VLC communication under blockage conditions while jointly minimizing power consumption remains a largely unexplored research direction. Although the above studies highlight the potential of hybrid technologies, they are predominantly communication-centric. A key challenge lies in ensuring reliable coverage, high data rates, and \ac{EE}, as both THz and VLC systems suffer from limited coverage, severe path loss, and high sensitivity to human-induced blockages, particularly in dense indoor environments. The incorporation of sensing into hybrid THz/VLC systems is still limited. Moreover, the impact of blockages on THz sensing, both independently and in conjunction with hybrid THz/VLC communication, and its implications for power consumption and overall \ac{EE} have not been comprehensively investigated in the existing literature, especially from a sensing-oriented perspective.}

{This work focuses on reducing the power consumption of an indoor wireless access network in the presence of human-induced blockages. Unlike existing studies, we propose a joint \ac{$THz_s-AP$} sensing with a hybrid \ac{$THz_c/VLC_c-AP$} communication. By incorporating multiple \ac{$VLC_c-APs$} alongside a \ac{$THz_c-AP$}, the proposed system leverages complementary coverage to effectively mitigate blockage effects.
As a result, users can be supported with ultra-high data-rate applications, while simultaneously accommodating moderate data-rate services. Furthermore, under severe blockage conditions, connectivity can be maintained through \ac{$VLC_c-APs$}, enabling visible light assisted communication.
The primary challenge addressed in this paper is the minimization of the total power consumption of the proposed system while satisfying users data-rate requirements under blockage conditions. Simulation results demonstrate that the proposed framework significantly reduces the power consumption and enhances connectivity, reliability, sensing accuracy, and \ac{EE}, thereby improving overall system robustness compared to benchmark schemes and highlighting its suitability for practical blockage-prone indoor \ac{ISAC} environments.}

The major contributions are as follows: 

\begin{itemize}
     \item We propose a {reliable and energy-efficient \ac{ISAC} model} comprising a terahertz sensing access point \ac{$THz_s-AP$} and a hybrid terahertz/visible light communication access point \ac{$THz_c/VLC_c-AP$} within an \ac{ISAC} architecture. In the proposed framework, users detected by the \ac{$THz_s-AP$} are served by the \ac{$THz_c-AP$}, while the remaining users are served by the \ac{$VLC_c-APs$}, both in the presence and absence of blockages.
    \item We optimize the transmit power allocation between the \ac{$THz_s-AP$} and \ac{$THz_c-AP$} by imposing individual sensing and communication \ac{SNR} constraints. The \ac{$THz_s-AP$} power is dynamically adjusted to guarantee reliable sensing and detection accuracy, whereas the \ac{$THz_c-AP$} power is allocated to ensure that communication users achieve the minimum required \ac{SNR}.
    \item We formulate a mixed-integer linear programming (MILP) optimization to minimize the total network power consumption by optimizing the transmit powers of the \ac{$VLC_c-APs$}, where only active \ac{APs} consume power, ensuring energy-efficient operation in the presence and absence of human blockages.
    \item We evaluate the monostatic sensing \cite{bjornson2024introduction}, performance of the \ac{$THz_s-AP$} in terms of probability of detection (\ac{$P_d$}), probability of false alarm (\ac{$FA_p$}), and sensing coverage probability (\ac{$SC_p$}) under the impact of blockage and non-blockage scenario. 
    \item {We compare the proposed framework average \ac{SE} in the presence and absence of blockages with the standalone model: considered a model comprising a \ac{$THz_s-AP$} for user sensing and a \ac{$THz_c-AP$} for communication, where only the users detected by the \ac{$THz_s-AP$} are served by the \ac{$THz_c-AP$} only and the benchmark model: a heterogeneous LiFi/WiFi system proposed in \cite{karoti2022improving}, all \ac{APs} coordinated by a \ac{NC}}.
    \item We calculate the \ac{EE} of the proposed framework and compare it with that of a non-optimized baseline in the presence and absence of blockages.
    \item We investigate the trade-off between sensing and communication rates as the number of users increases, and also, analyze its variation under different blockage densities ($\lambda_B$) {and evaluate other key performance metrics, including user average \ac{EE}, average \ac{SE}, average communication rate, and average sensing rate, while explicitly considering the impact of human blockages.}
\end{itemize}

The remainder of this paper is organized as follows. Section II introduces the system model, including the network architecture, sensing model, blockage model, and channel models for both \ac{THz} and \ac{VLC} links. Section III presents the power allocation strategy \textcolor{blue}{between \ac{$THz_s-AP$} and \ac{$THz_c-AP$}}, while Section IV formulates the overall system power minimization problem. Section V presents the \ac{SNR} analysis of human-induced blockages. {Section VI presents the performance analysis of average \ac{SE}, and average \ac{EE}. Section VII discusses the simulation results, and Section VIII outlines the limitations of the proposed framework. Finally, Section IX concludes the paper with key insights and findings.}

\begin{table}[h]
\centering
\caption{Summary of Notations}
\label{tab:notations}
\renewcommand{\arraystretch}{1.2} 
\begin{tabular}{cl}
\hline
\textbf{Symbol} & \textbf{Description} \\ \hline
$N$        & Number of users \\
$VLC-AP$        & Visible light communication-access point \\
$VLC_c-AP$        & Visible light for communication-access point \\
$THz-AP$        & Terahertz communication-access point \\
$THz_s-AP$        & Terahertz for sensing-access point \\
$VLC_c-APs$        & Visible light for communication-access points \\
$THz_c-AP$        & Terahertz for communication-access point \\
$P_{\text{w}}$ & Transmit power for THz sensing and communication \\
$PSD$    & Power spectral noise density \\
$\rho_1$     & Power allocation factor (sensing/communication) \\
$\gamma_{sens}$ & THz sensing signal-to-noise ratio threshold \\
$\gamma_{comm}$ & THz communication signal-to-noise ratio threshold \\
$\gamma_{VLC}$ & VLC communication signal-to-noise ratio threshold\\
$SE$    & Spectral efficiency \\
$EE$    & Energy efficiency \\
$P_d$ & Probability of detection\\
$FA_p$ & Probability of false alarm\\
$P^{th}_d$ & Probability of detection threshold\\
$p_{max}$ & Maximum power\\
$w/o$ & Without \\
\hline
\end{tabular}
\end{table}
\begin{figure}
    \centering
    \includegraphics[width=1 \linewidth, height=0.6 \linewidth]{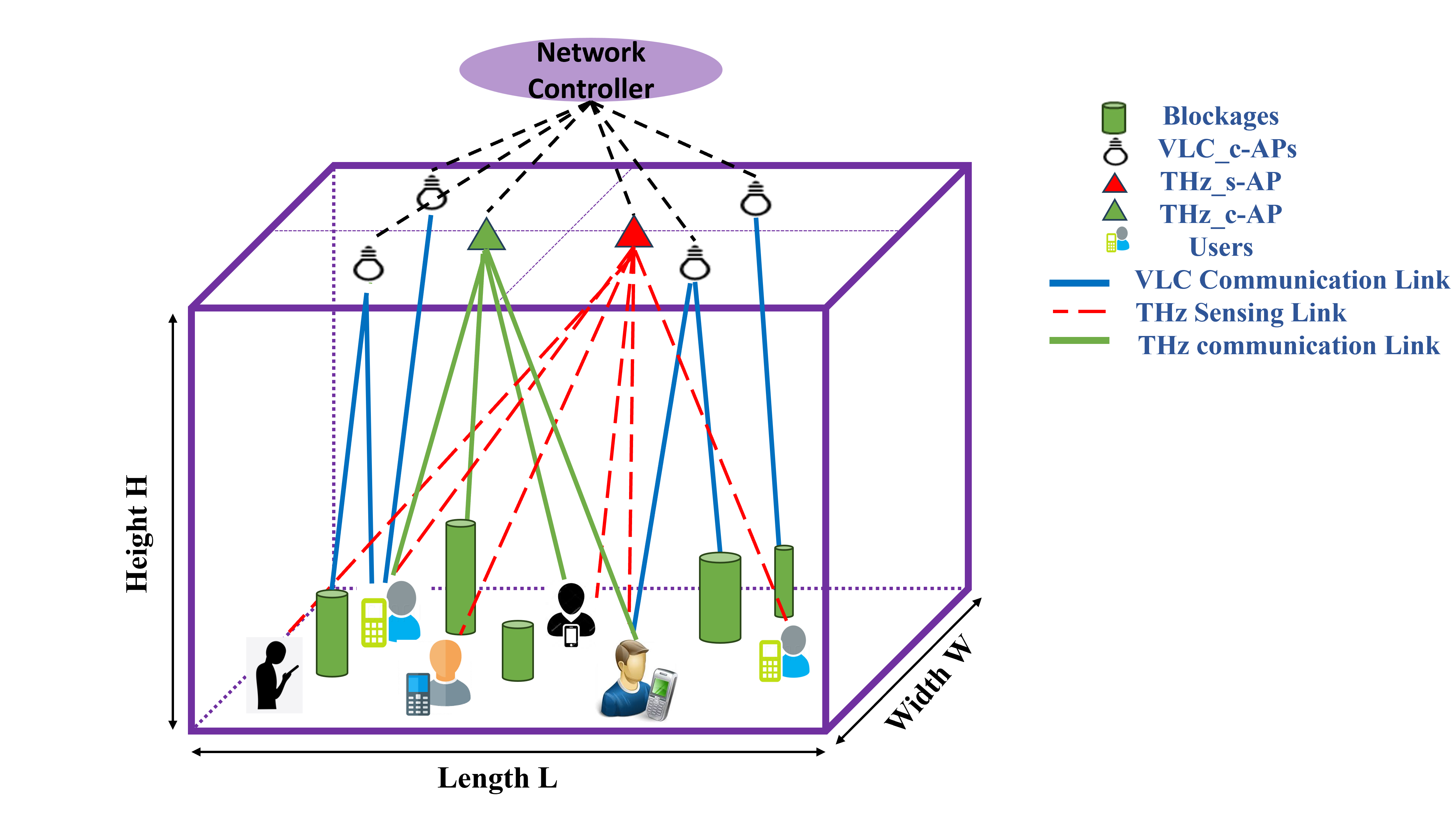}
    \caption{THz sensing with hybrid THz/VLC communication under the impact of human blockages.}
    \label{fig:placeholder}
\end{figure}

\vspace{-0.2cm}
\section{System Model}
We consider an indoor hybrid wireless communication architecture, as illustrated in Fig. 1, representing an enclosed space of dimensions \( H \times L \times W \)~meters (m). The setup consists of a heterogeneous deployment with a \ac{$THz_s-AP$}, \ac{$THz_c-AP$}, and \ac{$VLC_c-APs$}. All \ac{APs} are ceiling-mounted to ensure uniform spatial coverage and minimize obstruction effects. This deployment also provides robust \ac{LoS} connectivity, which is critical for the reliable operation of high-frequency \ac{THz} and \ac{VLC} links.

The monostatic sensing-based \ac{$THz_s-AP$} is responsible for user detection, where it identifies the presence of users based on a predefined sensing threshold $\gamma_{\text{sens}}$. Once sensing is performed, the \ac{$P_d$} is calculated, which then determines whether the user will be served by the \ac{$THz_c-AP$} or the \ac{$VLC_c-APs$} for communication. Users are assumed to be uniformly distributed in the indoor environment. Conversely, users who do not meet the sensing threshold are offloaded to the \ac{$VLC_c-APs$} for communication. This design enables a hybrid operation, where the \ac{$THz_s-AP$} supports intelligent user sensing, and the \ac{$THz_c/VLC_c-AP$} ensures efficient communication based on the users spatial positions and channel conditions. Both user association and link establishment are governed by the strict \ac{LoS} requirements, motivated by the directional nature of \ac{$THz_c/VLC_c-AP$} links. These links are highly sensitive to blockage and misalignment. Hence, this ensures \ac{LoS} guarantees high \ac{SNR} directional communication, resulting in robust and reliable connectivity even under dynamic blockage scenarios. To capture the impact of human presence on wireless propagation, human blockages are modeled using a \ac{MHCP-II}. This stochastic geometry model accounts for the spatial distribution and repulsion between blockers, preventing unrealistic overlap and enabling a more accurate representation of blockage induced variations in \ac{THz} and \ac{VLC} coverage. A \textcolor{blue}{\ac{NC}}, connected to all \ac{APs} via a high-capacity backhaul, manages network operations. The \textcolor{blue}{\ac{NC}} dynamically associates users to \ac{APs} and allocates resources by considering traffic demands, instantaneous channel conditions, and \ac{QoS} requirements. This ensures efficient load balancing, fairness among users, and enhances overall network performance. 

\subsection{Channel Model} 
\textit{Terahertz Communication Channel Model}: For \ac{THz}-band indoor communications, the \ac{LoS} channel gain between the \ac{$THz_c-AP$} and the $n^{th}$ user is denoted by $H_{n}^{\textnormal{THz}_{c}}$. In contrast to microwave and mmWave frequencies, \ac{THz} propagation is strongly influenced by two dominant factors: (i) free-space spreading loss, caused by geometric signal attenuation with distance, and (ii) molecular absorption loss, arising from frequency-selective absorption of \ac{THz} signals by atmospheric molecules (e.g., water vapor and oxygen). The free-space spreading component is modeled as \cite{prabhakar2024thz}: 
\begin{equation}
H_\textnormal{spr} = \frac{c}{4 \pi d f},
\end{equation}
where $d$ is the distance between the \ac{$THz_c-AP$} and the user, $c$ is the speed of light, and $f$ denotes the carrier frequency. The molecular absorption effect is represented by \cite{prabhakar2024thz}:
\begin{equation}
H_\textnormal{abs} = \exp\!\left(-\tfrac{1}{2} k(f) d \right),
\end{equation}
with $k(f)$ denoting the frequency-dependent molecular absorption coefficient. 

Accordingly, the overall \ac{LoS} channel gain for the \textcolor{blue}{$n^{th}$} user is expressed \cite{prabhakar2024thz}:
\begin{equation} \label{Eq:1}
H_{n}^{\textnormal{THz}_{c}} = H_\textnormal{spr}\, H_\textnormal{abs}.
\end{equation}

\textit{Visible Light Communication Channel Model}: In this paper, we adopt the \ac{LoS}  \ac{VLC} channel model for the downlink between a \ac{$VLC_c-APs$} and the users. The \ac{LoS} channel gain at \textcolor{blue}{$n^{th}$} user is denoted by $\textit{H}^{\text{VLC}_{c}}_{{n}}$ and expressed as \cite{wu2017access}:
\begin{equation}
\textit{H}^{\text{VLC}_{c}}_{{n}} =
\begin{cases}
\frac{(m+1)A}{2\pi D^2}\cos^m(\rho)\, T_{s}(\psi)\, g(\psi)\, \cos(\theta), & 0 \leq \psi \leq \psi_c, \\[2ex]
0, & \psi > \psi_c,
\end{cases}
\label{eq212}
\end{equation}
where $A$ is the area of the photodetector (PD) of the user receiver, $\theta$ is the angle of incidence at the receiver, $\rho$ is the irradiance angle at the \ac{$VLC_c-APs$} transmitter. $D$ is the distance between the \ac{$VLC_c-APs$} and the \textcolor{blue}{$n^{th}$} user. The term $T_{s}(\psi)$ denotes the optical filter gain, while $g(\psi)$ represents the optical concentrator gain.  

The Lambertian order $m$ of the \ac{$VLC_c-APs$} radiation pattern is given by \cite{wu2017access}:
\begin{equation}
m = \frac{-\ln(2)}{\ln\!\left(\cos(\textcolor{blue}{\phi_n})\right)},
\label{phiangle}
\end{equation}
where \textcolor{blue}{$\phi_n$} denotes the \ac{$VLC_c-APs$} semi-angle at half illuminance.  

The concentrator gain $g(\psi)$ is defined as\cite{wu2017access}:
\begin{equation}
g(\psi) =
\begin{cases}
\dfrac{ci^2}{\sin^2(\psi_c)}, & 0 \leq \psi \leq \psi_c, \\[2ex]
0, & \psi > \psi_c,
\end{cases}
\end{equation}
where $ci$ is the refractive index of the optical concentrator and $\psi_c$ represents the field of view (FOV) of the receiver. 

{ The illumination provided by the VLC access point is expressed using a commonly adopted formulation as \cite{khreishah2018hybrid} :
\begin{equation}
\Phi_{(VLC_c-APs)} = 683 \int_{0}^{\infty} V(\lambda)\, P(\lambda)\, d\lambda,
\end{equation}
where $V(\lambda)$ denotes the standard luminous efficiency function~\cite{ghassemlooy2019optical}.
Here, $P(\lambda)$ denotes the spectral power distribution of the \ac{$VLC_c-APs$}, which depends on both the average transmitted optical power and the \ac{$VLC_c-APs$}  type. It satisfies
\begin{equation}
\int_{0}^{\infty} P(\lambda)\, d\lambda =  p_{\max}.
\end{equation}}

{Accordingly, the luminous flux can be expressed as:
\begin{equation}
\Phi_{(VLC_c-APs)} ~[\mathrm{lm}] = G_m  p_{\max},
\end{equation}
where $G_m$ is a constant determined by the characteristics of the \ac{$VLC_c-APs$}. The total illumination contributed only by the active \ac{$VLC_c-APs$} at a given location of $n^{th}$ users at a distance $D$ can therefore be written as: 
\begin{equation}
\phi_n~[\mathrm{lm}] = \sum \textit{H}^{\text{VLC}_{c}}_{{n}} \Phi_{VLC_c-APs}. 
\end{equation}
Here, $\Phi_{VLC_c-APs}$ is a linear function of $\phi_n$, since all other parameters are typically constant. In the illumination model, a minimum horizontal illuminance constraint is imposed on each $\phi_n$ (e.g., 300~lux for a typical office environment). The contribution of ambient light can be incorporated by adding an additional term to the linear expression of $\phi_n$.}

\textit{Terahertz Monostatic Sensing Model}: THz communication channels model one-way links focusing on maximizing data rate, accounting for path loss, blockages, and noise. In contrast, THz sensing channels model round-trip reflection from targets, incorporating target radar cross-section (RCS) and higher path loss ($d_{f}^{-4}$), with \ac{SNR} used for detection rather than data transmission. For user sensing, we use the \ac{$THz_s-AP$} based path loss model, where the path loss function for the $n^{th}$ user is denoted as $L(\textcolor{blue}{d_{f}})_{\textcolor{blue}{(n)}}$ and is expressed as \cite{10622748}:
\begin{equation}
L(\textcolor{blue}{d_{f}})_{\textcolor{blue}{(n)}} = \frac{c^2 \, \sigma_{\text{RCS}}}{(4\pi)^3 f^2 d_{f}^4} \exp\!\left(-2k(f)d_{f}\right),
 \label{eq:7}
\end{equation}
where $d_{f}$ is the distance between the \ac{$THz_s-AP$} and the \textcolor{blue}{$n^{th}$} user. Also the $k(f)$ represents the frequency-dependent molecular absorption coefficient. 

The RCS of the user, denoted by $\sigma_{\text{\textcolor{blue}{RCS}}}$, is modeled as an exponentially distributed random variable \cite{10622748}:
\begin{equation}
\sigma_{\text{\textcolor{blue}{RCS}}} \sim \text{Exp}(1).
\end{equation}

\textit{Human Blockage Analysis Model}: In indoor environments, human blockages are modeled as cylindrical obstacles that obstruct the \ac{LoS} path between an \ac{APs} and users, thereby degrading link quality. To capture the spatial distribution of such blockages in a realistic manner, the \ac{MHCP-II} is employed \cite{singh2021performance},\cite{matern2013spatial}. The \ac{$THz_s-AP$} and {$THz_c/VLC_c-APs$} are assumed to be mounted at a ceiling height $H$, while human blockages of height $h_\textnormal{B}$ are distributed across the user ground plane. The horizontal separation between the \ac{APs} and a user is denoted by $d_\textnormal{T}$. Shown in Fig. 2, a blockage positioned at a distance $d_\textnormal{B}$ from the \ac{APs} is therefore located $d_\textnormal{T} - d_\textnormal{B}$ away from the user. 
\begin{figure}
    \centering    \includegraphics[width=0.80\linewidth, height=0.5 \linewidth]{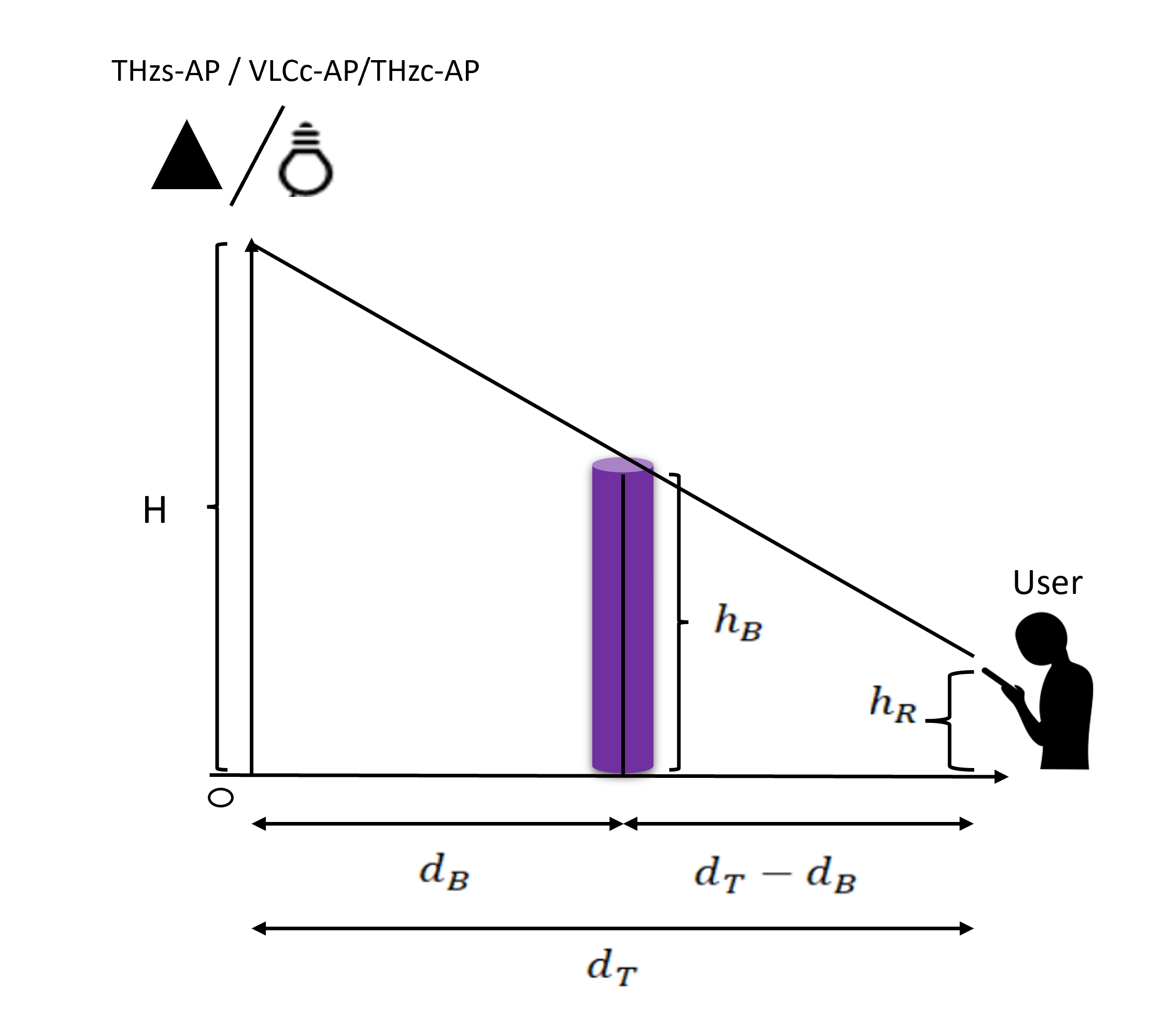}
    \caption{Schematic diagram for calculation of blockage distances.}
    \label{fig:placeholder}
\end{figure}
The geometric blocking condition can be expressed as \cite{singh2021performance}:
\begin{equation}
d_\textnormal{T} - d_\textnormal{B} = \frac{h_\textnormal{B}}{H}\, d_\textnormal{T}.
\end{equation}

The effective blockage density ($\lambda_\textnormal{B}$) under the MHCP-II model, with hard-core distance $\delta$, is given by \cite{singh2021performance}:
\begin{equation}
\lambda_\textnormal{B} = \frac{1-\exp(\lambda_\textnormal{p}\pi \delta^2)}{\pi \delta^2},
\end{equation}
where $\lambda_\textnormal{p}$ denotes the intensity of the baseline {Poisson Point Process (PPP)}.  

The probability that a user located at distance $d_\textnormal{B}$ from a potential blocker experiences \ac{LoS} obstruction is expressed as \cite{singh2021performance}:
\begin{equation} \label{Eq:14}
P_{B}(d_\textnormal{B}) = 1 -   \exp\!\left(-2\lambda_\textnormal{B} d_\textnormal{B} r_\textnormal{B}^2\right),
\end{equation}
where $r_\textnormal{B}$ is the radius of the cylindrical blockage.  
\section{{Power allocation between THz sensing and communication }}
In a joint sensing and communication framework, the total transmit power \( P_{\text{w}} \) is allocated between the two functionalities through a power-splitting factor \( \rho_1 \in [0,1] \). Specifically, a fraction \( \rho_1 \) of the power is assigned to sensing, while the remaining fraction \( (1-\rho_1) \) is devoted to communication, and {$P_{\mathrm{THz(cir)}}$ \cite{nguyen2025energy} is the circuitry power.} The achievable \ac{SNR} for each functionality depends on the corresponding \ac{SNR}, defined as \(\frac{L(\textcolor{blue}{d_{f}})_{\textcolor{blue}{(n)}}}{N_{\text{THz}_{s}} \, BW_{\text{THz}_{s}}}\) for the  sensing user and \(\frac{H_{n}^{\text{THz}_{c}}}{N_{\text{THz}_{c}} \, BW_{\text{THz}_{c}}}\) for the communication user. To guarantee reliable performance, both sensing and communication must satisfy predefined \ac{SNR} thresholds, represented by \( \gamma_{\text{sens}} \) and \( \gamma_{\text{comm}} \), respectively. 
\begin{equation}
(P_{\text{w}}+\textcolor{blue}{P_{\mathrm{THz(cir)}}}) \, (\frac{H_{n}^{\text{THz}_{c}}}{N_{\text{THz}_{c}} \, BW_{\text{THz}_{c}}}) \, (1-\rho_1) \;\geq\; \gamma_{\text{comm}}, \label{eq:12}
\end{equation}
\begin{equation}
(P_{\text{w}}+\textcolor{blue}{P_{\mathrm{THz(cir)}}}) (\frac{L(\textcolor{blue}{d_{f}})_{\textcolor{blue}{(n)}}}{N_{\text{THz}_{s}} \, BW_{\text{THz}_{s}}}) \, (\rho_1) \;\geq\; \gamma_{\text{sens}},\label{eq:13}
\end{equation}
\begin{equation}
0\leq\rho_{1} \leq 1.
\end{equation}

The constraint in \eqref{eq:12} ensures that the achieved \ac{SNR} of the \ac{$THz_c$} link meets or exceeds the threshold $\gamma_{\text{comm}}$, thereby guaranteeing reliable communication performance. Similarly, the sensing \ac{SNR} constraint in \eqref{eq:13} corresponds to the \ac{$THz_s$} link in a based joint sensing and communication system. It ensures that the power allocated for sensing is sufficient to achieve the required \ac{SNR} for accurate user detection. ${N_{\text{THz}_{s}}}$ and ${BW_{\text{THz}_{s}}}$ represent the noise \ac{PSD} and bandwidth of the \ac{THz} sensing, while ${N_{\text{THz}_{c}}}$ and ${BW_{\text{THz}_{c}}}$ denote the noise \ac{PSD} and bandwidth of the \ac{THz} communication.
The inequalities ensure that the allocated transmit power is sufficient to meet the 
minimum \ac{QoS} requirements for both sensing and communication. Consequently, 
the optimization of \( \rho_1 \) plays a critical role in balancing the trade-off between 
the two functionalities, directly impacting the feasibility and efficiency of the \ac{ISAC} systems.

\section{Total system Power minimization}
In this section, we analyze the total power consumption of our proposed model. We adopt a user-centric strategy in which only the \ac{$VLC_c-APs$} associated with active users remain operational, while all non-associated \ac{$VLC_c-APs$} are switched off. The total transmit power is optimized such that each user is served exclusively by its associated \ac{AP}, and all other \ac{APs} remain inactive. This adaptive activation mechanism ensures that unnecessary power consumption from idle \ac{APs} is eliminated, thereby significantly reducing overall system power usage and enhancing the \ac{EE} of the network. The total power consumption is defined as the power required to establish and maintain reliable communication links between the \ac{$VLC_c-APs$} and users through their respective transmission processes. 

We consider a system comprising a \ac{$VLC_c-APs$}, denoted by 
$\mathcal{L} = \{1, \dots, L\}$, and \textcolor{blue}{each $n^{th}$} user is associated 
with a serving \ac{AP} $\ell \in \mathcal{L}$. The binary activity indicator of each AP, 
$\alpha_\ell \in \{0,1\}, \ \forall \ell \in \mathcal{L}$, and  $\rho_{l}$
representing the total transmit power allocation of each AP, $\rho_\ell \geq 0, \ \forall \ell \in \mathcal{L}$. 

The optimization problem can be cast as:
\begin{subequations} \label{eq:14}
\begin{align}
& \operatorname*{minimise}_{\rho_{l}, \alpha_{l} \; \forall l} 
&& \sum_{l=1}^{L} \alpha_{l}\rho_{l} +(P_{\text{w}}+\textcolor{blue}{P_{\mathrm{THz(cir)}}}),
\label{eq:14a} \\[6pt]
& \text{Subject to} 
&& H^{\text{VLC}_{c}}_{n} \, \rho_{l} \;\;\geq\;\; \gamma_{\text{VLC}} \, N_{VLC_{c}}BW_{VLC_c},
\notag \\[-3pt]
&&& \forall \, l \in \{1,\dots,L\},
\label{eq:14b} \\[6pt]
&&& \rho_{l} \leq \alpha_l \, p_{\max}, 
\quad \forall \, l \in \{1,\dots,L\},
\label{eq:14c} 
\end{align}
\end{subequations}

{here, $\alpha_l$ is a binary activation variable defined as:
\[
\alpha_{l} =
\begin{cases}
1, & \text{if the $l^{th}$ VLC AP is active}, \\
0, & \text{otherwise}.
\end{cases}
\]}

The constraints (\ref{eq:14b}) satisfy the minimum \ac{SNR} requirement for each $n^{th}$ user, where $\gamma_{VLC}$, ${N_{\text{VLC}_{c}}}$, and ${BW_{\text{VLC}_{c}}}$ denote as the \ac{$VLC_c-APs$} \ac{SNR} threshold, noise \ac{PSD}, and bandwidth of the \ac{VLC} respectively. The constraint in (\ref{eq:14c}) provides the transmit power constraint per \ac{$VLC_c-AP$}. The binary activation variable $\alpha_l$ indicates whether the $l^{th}$ \ac{$VLC_c-AP$} is active, where $\alpha_l = 1$ denotes that the AP is switched on and $\alpha_l = 0$ denotes that it is switched off. {The variable $\rho_l$ represents the transmit power allocated when the $n^{th}$ user is associated with the $l^{th}$ \ac{VLC-AP}. In this work, for the \ac{$VLC_c-APs$}, each $n^{th}$ user is associated with only one VLC-AP, selected to provide the maximum \ac{SNR}}

This is a non-convex MILP due to the presence of the binary activation variable $\alpha_l$. Using this optimization, the power minimization problem with binary AP selection is solved through the CVX-MOSEK solver \cite{cvx} and \cite{mosek}.

\section{SNR Analysis of THz Sensing with Hybrid THz/VLC Communication under Impact of Human Blockages} 

{
For each channel model corresponding to \ac{$THz_s-AP$}, \ac{$THz_c-AP$}, and \ac{$VLC_c-APs$}, the received signal is modeled as a function of the user distances, denoted by $d$, $D$, and $d_{f}$, respectively. Based on these distances, the proposed \ac{SNR} expressions are derived for each link.}

{To incorporate the impact of blockage, the computed \ac{SNR} values are weighted by the blockage probability that is $(P_{B}(d_{B}))$, which depends on the blockage parameters $\lambda_B$, $d_B$, and $r_B$. The simulation is performed over 1000 Monte Carlo iterations for each user, and the resulting \ac{SNR} values are averaged over all iterations.}

{The \ac{SNR} performance of the proposed system under blockage conditions is first evaluated for \ac{THz} sensing, followed by hybrid \ac{THz}/\ac{VLC} communications. Specifically, if the $n^{th}$ user lies within the sensing coverage and achieves a probability of detection (\ac{$P_d$}) under the impact of blockages denoted as $P_{d(n)}^{(B)}$ greater than the probability of detection threshold denoted as $P^{th}_d$, the user is associated with the \ac{$THz_c-AP$}. Otherwise, if $P_{d(n)}^{(B)} < P^{th}_d $, the user is served by the \ac{$VLC_c-APs$}. Among the \ac{$VLC_c-APs$}, the $n^{th}$ user is associated with that \ac{$VLC_c-AP$} which is closest to the user and provides the highest \ac{SNR}.}

\subsection{SNR analysis of \ac{$THz_s-AP$} and \ac{$THz_c-AP$}}
The sensing \ac{SNR} at the \ac{$THz_s-AP$}, considering the impact of human blockages, is denoted by $SNR^{\textcolor{blue}{\text{THz}_\text{s}}{(B)}}_{{n}} $ and is given as:
\begin{equation}
SNR^{\textcolor{blue}{\text{THz}_\text{s}}{(B)}}_{{n}} = \frac{G_t G_r L(\textcolor{blue}{d_{f}})_{\textcolor{blue}{(n)}}}{N_{\text{THz}_{s}} \, BW_{\text{THz}_{s}}} \cdot P_{B}(d_B),
\end{equation} where the $G_t$ and $G_r$ are the transmitter and receiver antenna gains of the \ac{$THz_s-AP$} and receiver of the user.  \( N_{\text{THz}_{s}} \) denotes the noise \ac{PSD} in the THz sensing channel, and \( BW_{\text{THz}_{s}} \) corresponds to the allocated bandwidth of the \ac{$THz_s-AP$}. The term $P_{B}(d_B)$ represents the probability of blockage at distance $d_B$, which models the attenuation caused by the human blockers.  

Similarly, for the \ac{THz} communication \ac{SNR} with the inclusion of human blockages, the {$THz_c-AP$} channel gain is expressed as:
\begin{equation}
H^{(B)}_{\text{THz}_{c}} = G_{t} G_{r} H_n^{\text{THz}_{c}} \cdot P_{B}(d_B),
\end{equation}
where, \( G_t \) and \( G_r \)
denote the transmit and receive antenna gains in the \ac{$THz_c-AP$} user’s receiver, respectively. $H_n^{\text{THz}_{c}}$ is defined in (\ref{Eq:1}). Therefore, the \ac{SNR} experienced by the $n^{th}$ user in the \ac{THz} channel in the presence of human blockages is defined as: 
\begin{equation}
\text{SNR}_{n}^{\mathrm{THz}_{c}(B)} = \frac{H^{(B)}_{\text{THz}_{c}}}{N_{\text{THz}_{c}} BW_{\text{THz}_{c}}}
\label{eq:SNR_WiFi},
\end{equation}
where the \ac{$THz_c-AP$}, \( N_{\text{THz}_{c}} \) denotes the noise \ac{PSD} in the THz channel, and \( BW_{\text{THz}_{c}} \) corresponds to the allocated bandwidth of the \ac{$THz_c-AP$}.

\subsection{Probability of detection}
A widely used decision strategy is the 
Neyman--Pearson criterion \cite{4815550}, which is a special case of the Bayes criterion. 
Under this approach, the objective is to maximize the 
\ac{$P_d$} while ensuring that the probability of false alarm (\ac{$FA_p$}) does not exceed a predefined threshold. This criterion is particularly important because, in practical sensing systems, false 
alarms can lead to wasted resources, misinterpretation of data, or unnecessary responses. By constraining \ac{$FA_p$}, the Neyman--Pearson framework provides a controlled trade-off between sensitivity (detecting true user) and reliability (avoiding false user). The $P_d$ serves as a key metric for sensing performance. It represents the likelihood that the system correctly identifies the presence of a target when the target is indeed present\cite{4815550}. 
\begin{lemma}
The \ac{$P_d$} of the \textcolor{blue}{$n^{th}$ user} in the proposed network is under the impact of human blockages, and is denoted by $P_{d(n)}^{(B)}$, and calculated as follows:
\begin{equation}
\begin{aligned}
P_{d(n)}^{(B)} &= \frac{1}{2} \left[ 1 - \operatorname{erf} \left( \operatorname{erf}^{-1}(1 - 2\ac{$FA_p$}) - \sqrt{SNR^{\textcolor{blue}{\mathrm{THz}_\text{s}}{(B)}}_{{n}}/2 } \right) \right] 
\end{aligned}
\end{equation}
\begin{equation}
\begin{aligned}
P_{d(n)}^{(B)}&= \frac{1}{2} \operatorname{erfc} \left( \operatorname{erf}^{-1}(2\ac{$FA_p$}) - \sqrt{SNR^{\textcolor{blue}{\mathrm{THz}_\text{s}}{(B)}}_{{n}}} {/2} \right)
\end{aligned}
\end{equation}
 
\textit{Proof}: The \textcolor{blue}{$P_{d(n)}^{(B)}$} depends on the $SNR^{\textcolor{blue}{\mathrm{THz}_{s}}{(B)}}_{{n}} $ and the \ac{$FA_p$}. Users within the coverage area are not always detected correctly due to false alarms. Hence, \ac{$P_d$} plays a crucial role in ensuring the accuracy of user presence detection within the sensing coverage.
\end{lemma}
\subsection{SNR analysis of \ac{$VLC_c-APs$}}
Similarly, for the \ac{$VLC_c-APs$} communication model, the channel gain in the presence of blockages is given by:
\begin{equation}   
H^{(B)}_{\text{VLC}_{c}} = H_n^{\text{VLC}_{c}} \cdot P_{B}(d_B).
\end{equation}

Further, the \ac{SNR} experienced by the $n^{th}$ user from the \ac{$VLC_c-APs$} channel under the blockage impacts is described as: 
\begin{equation}
SNR_n^{\mathrm{VLC}_{c}(B)} = \frac{\left( R H^{(B)}_{\mathrm{VLC}_{c}} / k \right)^2}{N_{\mathrm{VLC_{c}}} \cdot BW_{\mathrm{VLC_{c}}}},
\label{eq:SNR_VLC}
\end{equation}
where \( N_{\text{VLC}_{c}} \) denotes the noise PSD in the \ac{$VLC_c-APs$} channel, and \( BW_{\text{VLC}_{c}} \) corresponds to the available bandwidth of the \ac{$VLC_c-APs$}. Additionally, \( R \) denotes the responsivity of the receiver, while \( k \) is the coefficient for optical-to-electrical power conversion.
\begin{lemma}
The maximum received \ac{SNR} experienced by $n^{th}$ user from \ac{$VLC_c-APs$} system under the impact of blockages is denoted by $\text{SNR}_n^{\rm{max}\textit{(B)}}$ and expressed as: 
\begin{equation}
\text{SNR}_n^{\rm{max}\textit{(B)}}=  \underset{}{\operatorname*{max}} (SNR_n^{\mathrm{VLC}_{c}(B)}).
\end{equation}

Proof: We consider $\text{SNR}_n^{\rm{max}\textit{(B)}}$, which selects from the \ac{$VLC_c-APs$} providing the highest \ac{SNR}. Since \ac{SNR} generally decreases with distance, this association strategy effectively corresponds to selecting the closest \ac{$VLC_c-AP$} for each user.
\end{lemma}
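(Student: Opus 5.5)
The argument rests on two facts: the maximum of SNR$_n^{\mathrm{VLC}_c(B)}$ over the finite set $\mathcal{L}$ exists, and under the LoS model of \eqref{eq212} this maximum is attained by a nearby (in practice the nearest) VLC$_c$-AP. The key steps are the following.

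\textbf{Existence.} For fixed user $n$, each $\ell\in\mathcal{L}$ yields one value of SNR$_n^{\mathrm{VLC}_c(B)}$ through \eqref{eq:SNR_VLC}. These values are finite and nonnegative, and $\mathcal{L}$ is finite. So the maximum over $\ell$ exists and is attained by some index $\ell^\star$, which is well-defined up to ties. This already gives SNR$_n^{\max(B)}$ as a well-defined quantity, and also shows it is the SNR the NC delivers when it uses the max-SNR association rule stated in Section V.

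\textbf{Monotonicity in $\ell$.} SNR$_n^{\mathrm{VLC}_c(B)}$ is proportional to $\bigl(H_n^{\mathrm{VLC}_c}P_B(d_B)\bigr)^2$, since $R$, $k$, $N_{\mathrm{VLC}_c}$ and $BW_{\mathrm{VLC}_c}$ do not depend on $\ell$. Maximizing the SNR is therefore the same as maximizing $H_n^{\mathrm{VLC}_c}P_B(d_B)$.

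\textbf{Geometry of the ceiling deployment.} With all APs ceiling-mounted at height $H$ and the receiver facing up, we have $\cos\rho=\cos\theta=H/D$. For $\psi\le\psi_c$, the channel gain is then $H_n^{\mathrm{VLC}_c}=\frac{(m+1)A}{2\pi}T_s g\,H^{m+1}D^{-(m+3)}$. This is strictly decreasing in $D$, and it is zero outside the FOV. Hence, among the APs within the FOV, the channel gain is maximized by the closest one.

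\textbf{Main obstacle: the blockage factor.} The factor $P_B(d_B)$ also depends on $\ell$, through $d_B=(1-h_B/H)d_T$, and by \eqref{Eq:14} it is increasing in $d_T$. The product $H_n^{\mathrm{VLC}_c}P_B(d_B)$ is therefore not automatically monotone in distance. I would first try to show that the $D^{-2(m+3)}$ decay dominates the growth of $\bigl(1-e^{-2\lambda_B d_B r_B^2}\bigr)^2$ over room-scale distances, using the fact that $\lambda_B r_B^2$ is small. If this domination fails in general, I would weaken the claim. The max-SNR statement is exact by the existence step. The identification with the closest AP would then be stated as holding whenever the distance decay dominates, which is the regime of the paper's parameters. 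This matches the word ``generally'' in the lemma's justification.
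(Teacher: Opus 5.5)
\textbf{Verdict: gap in the closest-AP step.} Your existence step is correct, and it is essentially all the paper's proof contains. The lemma is really a definition: the serving VLC$_c$-AP maximizes the SNR over the finite set $\mathcal{L}$. The paper then asserts, without argument, that because SNR ``generally decreases with distance'' this is the closest AP. You are right that this is not automatic under the paper's literal model $H^{(B)}_{\mathrm{VLC}_c}=H_n^{\mathrm{VLC}_c}P_B(d_B)$, because $P_B$ increases with $d_T$. Your reduction to $H_n^{\mathrm{VLC}_c}P_B$ and the $D^{-(m+3)}$ monotonicity are also fine. The problem is the step you propose for the blockage factor, which would fail.

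\textbf{Why smallness of $\lambda_B r_B^2$ does not help.} Write $h$ for the vertical AP--user offset. The comparison between two APs $\ell,\ell'$ involves only
\[
\frac{\mathrm{SNR}_{\ell'}}{\mathrm{SNR}_{\ell}}=\Bigl(\frac{D_\ell}{D_{\ell'}}\Bigr)^{2(m+3)}\Bigl(\frac{P_B(d_{B,\ell'})}{P_B(d_{B,\ell})}\Bigr)^{2}.
\]
If $\lambda_B r_B^2$ is small, then $P_B\approx 2\lambda_B r_B^2(1-h_B/H)\,d_T$. The second factor then tends to $(d_{T,\ell'}/d_{T,\ell})^2$, so the small constant cancels and gives you nothing.

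Worse, each AP's SNR then behaves like $d_T^2(d_T^2+h^2)^{-(m+3)}$. This is increasing for $d_T<h/\sqrt{m+2}$, about $1.1$ m for $m=1$ and $h=1.95$ m. In the paper's 2.5 m AP grid, a user $0.2$ m from an AP would get roughly four times more SNR from the adjacent AP.

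Finally, no domination argument can hold uniformly, because $P_B(0)=0$. A user directly below an AP receives zero SNR from it under this model, so the nearest AP is never the maximizer there.

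\textbf{What actually makes the claim (nearly) true.} It is the opposite regime. With the Table II values, $\lambda_B r_B^2=16$ is large, and $P_B\approx 1-e^{-12 d_T}$ saturates to 1 within a few tens of centimetres. The blockage ratio is then close to $1$ and your geometric monotonicity decides. The exception is a small disc, of radius about a centimetre, around each AP, where $P_B(d_{B,\ell})$ is still small.

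\textbf{How to state the weakened claim.} Your fallback, ``holds whenever the distance decay dominates'', is tautological as written. Make it precise: $\ell^\star$ is the nearest AP $\ell$ for every user outside the set where
\[
\Bigl(\frac{D_{\ell'}}{D_\ell}\Bigr)^{2(m+3)}P_B(d_{B,\ell})^2<P_B(d_{B,\ell'})^2 \quad\text{for some } \ell'.
\]
Then check that this set is tiny for the paper's parameters.

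\textbf{Alternative fix.} The obstacle is an artifact of weighting by the blockage probability. If you weight by the LoS probability $1-P_B(d_B)=e^{-2\lambda_B d_B r_B^2}$, which decreases in $d_T$, the product is strictly decreasing in distance. Your geometric step then proves the closest-AP claim immediately.
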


\section{{Performance Analysis}}
{In this section, we evaluate the overall system performance in terms of average \ac{SE} and average \ac{EE}. The analysis highlights the effectiveness of the proposed framework in maintaining reliable communication performance under human blockage conditions while minimizing power consumption and enhancing overall system robustness.}
\subsection{Average \ac{SE}}
\begin{lemma}
The average \ac{SE} of the proposed network, denoted by $S_{H}^{\rm avg}$, is computed based on the probability of detection threshold {($P_d^{\rm th}=0.5$), which determines whether the $n^{th}$ user is associated with the \ac{$THz_c-AP$} or the \ac{$VLC_c-APs$}}, as follows:

\begin{equation}
S_{H}^{\mathrm{THz}_{c}/\mathrm{VLC}_{c}} =
\begin{cases}
\text{SNR}_{n}^{\mathrm{THz}_{c}(B)}, & P_{d(n)}^{(B)} > \textcolor{blue}{P^{th}_d}, \\[10pt]
\text{SNR}_{n}^{\textnormal{max}{(B)}},    & P_{d(n)}^{(B)} \leq \textcolor{blue}{P^{th}_d},
\end{cases}
\end{equation}
\begin{equation}
S_{H}^{\mathrm{avg}} = 
\frac{1}{N} \sum_{n=1}^{N_{AP}} 
\log_2 \Bigl( 1 + \text{S}_{H}^{\mathrm{THz_{c}/VLC_{c}}} \Bigr),
\end{equation}
where $N_{AP}$ is the number of users associated with either \ac{$THz_c-AP$} or \ac{$VLC_c-AP$} and $N$ is the number of users.

\textit{Proof}: The average \ac{SE} of the proposed model is evaluated after user sensing by the \ac{$THz_s-AP$}. The $P^{th}_{d}$ determines whether a user will be served by the \ac{$THz_c-AP$} or the \ac{$VLC_c-AP$}. We further analyze the performance under the impact of blockages, where sensing performance parameters are significantly impacted. In such a case, most users can establish a reliable connection through the \ac{$VLC_c-APs$}, while only a few are able to maintain service through the \ac{$THz_c-AP$}.
\end{lemma}
\subsection{Average \ac{EE}}
\begin{lemma}
The average \ac{EE} of the proposed network, denoted by $\eta^{\rm{THz/VLC}}$, is obtained by dividing the \ac{SE} by the total power consumption, from the \ac{$THz_s-AP$}, \ac{$THz_c-AP$} and \ac{$VLC_c-APs$} is expressed as follows:
\begin{equation}
\eta^{\mathrm{THz/VLC}} = \frac{S_{H}^{\mathrm{avg}}}{(P_{\mathrm{THz_s}} + P_{\mathrm{THz_c}}+\textcolor{blue}{P_{\mathrm{THz(cir)}}})+\alpha P_{\mathrm{VLC_c}}},
\end{equation}
where $P_{\mathrm{THz_s}}$, $P_{\mathrm{THz_c}}$, and $P_{\mathrm{VLC_c}}$ denote the transmit powers of the \ac{$THz_s-AP$}, \ac{$THz_c-AP$}, and \ac{$VLC_c-APs$}, respectively, and \textcolor{blue}{$P_{\mathrm{THz(cir)}}$ \cite{nguyen2025energy} represents the circuitry power consumption of the \ac{$THz_c$} and \ac{$THz_s$}.}

Proof: The average \ac{EE} of the proposed model accounts for the power consumption of \ac{$THz_s-AP$}, \ac{$THz_c-AP$}, and \ac{$VLC_c-APs$} components. A higher value of $\eta^{\mathrm{THz/VLC}}$ indicates that the system achieves greater \ac{SE} per unit of power, reflecting enhanced energy-efficient performance under the hybrid sensing and communication framework. Note that the total power of \ac{$VLC_c-APs$} is multiplied by $\alpha$, corresponding to the deployed number of \ac{$VLC_c-APs$}.
\end{lemma}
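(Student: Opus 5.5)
The plan is to read the average EE lemma as a definition and check that every term in it is well defined and consistent with the earlier sections. That means identifying the numerator with the average SE from the preceding lemma, and the denominator with the minimised objective of the MILP in \eqref{eq:14}. There is nothing deep to derive. The work is to make explicit which power terms appear and why, so that the ratio really is "spectral efficiency per unit of consumed power."

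\textbf{Step 1: the numerator.} I take the numerator to be $S_H^{\rm avg}$ from the average SE lemma. That quantity is obtained by splitting users according to $P_{d(n)}^{(B)}$ versus $P_d^{th}$, where $P_{d(n)}^{(B)}$ comes from the detection lemma. A user with $P_{d(n)}^{(B)} > P_d^{th}$ contributes $\log_2(1+\text{SNR}_n^{\mathrm{THz}_c(B)})$. Any other user contributes $\log_2(1+\text{SNR}_n^{\max(B)})$, with the maximum taken as in the VLC SNR lemma. Averaging these per-user terms over $N$ gives a nonnegative, finite quantity in bit/s/Hz.

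\textbf{Step 2: the THz power terms.} I write the THz power as $P_{\mathrm{THz_s}} = \rho_1 P_{\rm w}$ and $P_{\mathrm{THz_c}} = (1-\rho_1)P_{\rm w}$, using the power split from Section III. Their sum recovers $P_{\rm w}$. Adding $P_{\mathrm{THz(cir)}}$ then reproduces the constant term $(P_{\rm w}+P_{\mathrm{THz(cir)}})$ in the objective \eqref{eq:14a}.

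\textbf{Step 3: the VLC power term.} I identify the VLC contribution $\alpha P_{\mathrm{VLC_c}}$ with $\sum_{l}\alpha_l\rho_l$ at the optimum of \eqref{eq:14}. Two facts from the MILP justify this. Constraint \eqref{eq:14c} forces $\rho_l=0$ whenever $\alpha_l=0$. Constraint \eqref{eq:14b} fixes the optimal power of each active AP at $\rho_l=\gamma_{\rm VLC}N_{VLC_c}BW_{VLC_c}/H_n^{\mathrm{VLC}_c}$. So if the per-AP transmit power is taken to be common, or is understood as an average, the sum collapses to (number of active APs) $\times P_{\mathrm{VLC_c}}$, which is exactly the $\alpha$-weighted term. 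The denominator is therefore the total consumed power, and the ratio has units of bit/s/Hz/W. Monotonicity is then immediate: a higher $\eta^{\mathrm{THz/VLC}}$ means more SE per unit power.

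\textbf{Main obstacle.} The hard step is Step 3, because the paper's $\alpha$ is a scalar count while the MILP uses per-AP binaries $\alpha_l$. I would first resolve this by defining $\alpha=\sum_l\alpha_l$ and $P_{\mathrm{VLC_c}}=\frac{1}{\alpha}\sum_l\alpha_l\rho_l$, the mean power per active AP. With those definitions the identity holds exactly and needs no assumption of equal per-AP powers.
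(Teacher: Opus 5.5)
Your proposal is correct and follows the same route as the paper. The paper likewise treats the lemma as a definition (average SE divided by the total THz sensing, THz communication, circuit and VLC power) and supports it only interpretively, while your explicit matching of the denominator to \eqref{eq:14a} via $P_{\mathrm{THz_s}}+P_{\mathrm{THz_c}}=P_{\text{w}}$ is a consistency check the paper leaves implicit; note that Section III literally applies $\rho_1$ to $P_{\text{w}}+P_{\mathrm{THz(cir)}}$, so splitting $P_{\text{w}}$ alone is the reading that avoids counting the circuit power twice. The only divergence is $\alpha$: the paper calls it the number of \emph{deployed} VLC APs, whereas you take $\alpha=\sum_l\alpha_l$ with $P_{\mathrm{VLC_c}}$ the mean power per active AP, which is the reading consistent with switched-off APs consuming nothing in Section IV (set the VLC term to $0$ when no AP is active, since your average is then undefined).
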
 
\begin{table}[!t]
\centering
\caption{Simulation Parameters \cite{singh2021performance},\cite{prabhakar2024thz},\cite{karoti2022improving}}
\begin{tabular}{|l|c|}
\hline
\textbf{Parameter (Symbol)} & \textbf{Value}\\
\hline
\multicolumn{2}{|c|}{\textbf{THz}} \\
\hline
Central carrier frequency ($f_c$) & 370 GHz \\
THz sensing and communication power ($P_{\text{w}}$) & 2 W \\ 
\textcolor{blue}{THz sensing and communication circuitry power ($P_{\mathrm{THz(cir)}}$)} & \textcolor{blue}{5.6 mW} \\ 
Bandwidth of $THz_s-AP$ and $THz_c-AP$($B_{\text{THz}}$) & 100 MHz \\
THz Noise PSD ($N_{\text{THz}}$) & $-174$ dBm/Hz \\
THz Sensing threshold  \textcolor{blue}{($\gamma_{sens}$)} & -5 dB \\
Probability of false alarm ($FA_{p}$) & $10^{-2}$ \\
\textcolor{blue}{Probability of detection threshold ($P_{d}^{th}$)} & \textcolor{blue}{$0.5$} \\
THz communication threshold  ($\gamma_{comm}$) & 25 dB \\
Height of blockage ($H_B$) & 1.8 m \\
Radius of blockage ($r_B$) & 2.0 m \\
Density of blockages ($\lambda_B$) &  4 \\
THz antenna gains ($G_t$, $G_r$) & 1  \\
 \hline
\multicolumn{2}{|c|}{\textbf{VLC}} \\
\hline
Optical filter gain ($g_{\text{filter}}$) & 1 \\
VLC communication threshold  ($\gamma_{VLC}$) & 15 dB \\
Half-intensity radiation angle ($\varphi_{1/2}$) & 60$^\circ$ \\
PD Field of view (FOV) ($\theta_{\max}$) & 90$^\circ$ \\
Optical-to-electrical conversion efficiency ($k$) & 3 \\
Responsivity of receiver ($R$) & 0.53 A/W \\
Maximum power ($p_{max}$) & 5W \\
Noise PSD ($N_{\text{VLC}}$) & $-210$ dBm/MHz \\
Bandwidth of $VLC_c-APs$ ($B_{\text{VLC}}$) & 40 MHz \\
\hline
\end{tabular}
\end{table}
\section{Results and Discussions}
This section provides a rigorous validation of the analytical framework through {Monte Carlo} simulations, ensuring consistency with empirical evaluations.  The proposed system is evaluated under human blockage conditions in a \(5\,\text{m} \times 5\,\text{m} \times 3\,\text{m}\) indoor environment, {with  
\ac{$THz_c-AP$} located at $(3.0,\, 2.5,\, 2.8)$ and \ac{$THz_s-AP$} at $(1.5,\, 2.5,\, 2.8)$ respectively. Furthermore, \ac{$VLC_c-APs$} are deployed on the ceiling coordinates are 
$(1.25,\,1.25,\,2.8)$, $(1.25,\,3.75,\,2.8)$, $(3.75,\,3.75,\,2.8)$, and $(3.75,\,1.25,\,2.8)$ \cite{karoti2022improving}}. A number of users ($N$=10) were randomly positioned at a height of 0.85\,m above the ground plane. The corresponding system parameters, including blockage density ($\lambda_{B}$) and blockage radius, are summarized in Table II. This simulation setup enables a realistic characterization of signal attenuation and spatial coverage in blockage-prone scenarios. The system performance is evaluated with respect to average \ac{SE}, \ac{$P_d$}, \ac{$SC_p$}, $\rho_{1}$, and the average communication and sensing rates, all analyzed as functions of the user density within the environment. Each metric is averaged over 1000 realizations to ensure statistical reliability.
\begin{figure*}[htbp]
  \centering
  \begin{subfigure}[t]{0.24\textwidth}
    \centering    \includegraphics[width=1.1\linewidth,height=1\linewidth]{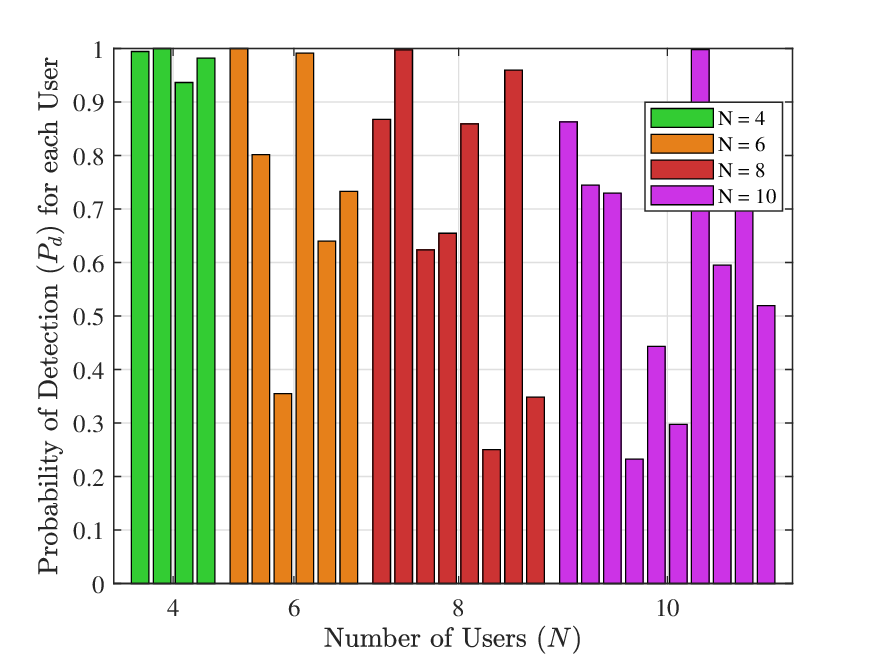}
    \caption{$P_d$ Vs. $N$ $w/o$ blockages.}
    \label{fig:3a}
  \end{subfigure}\hfill
  \begin{subfigure}[t]{0.24\textwidth}
    \centering    \includegraphics[width=1.1\linewidth,height=1\linewidth]{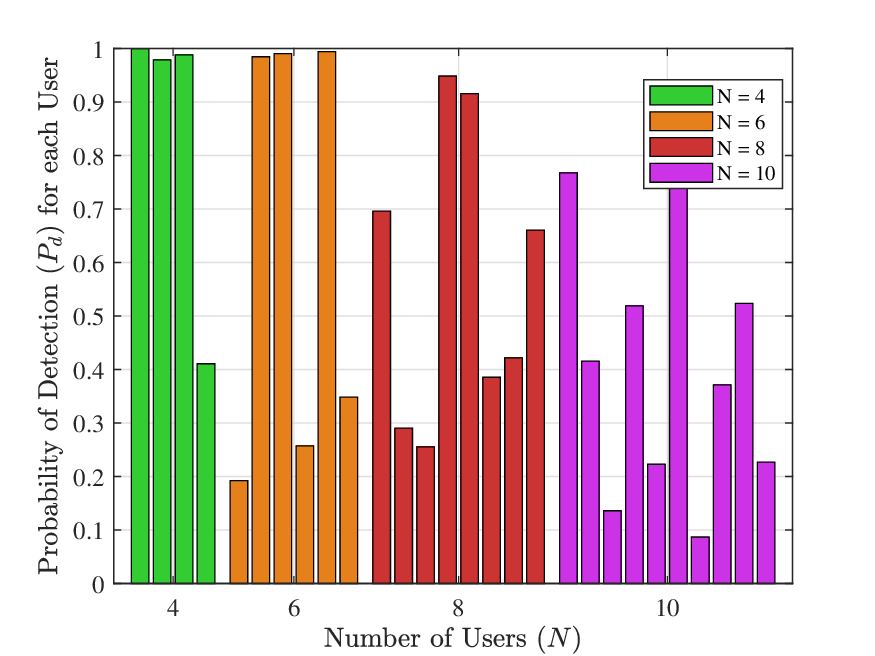}
    \caption{$P_d$ Vs. $N$ with blockages.}
    \label{fig:3b}
  \end{subfigure}\hfill
   \begin{subfigure}[t]{0.24\textwidth}
    \centering    \includegraphics[width=1.1\linewidth,height=1\linewidth]{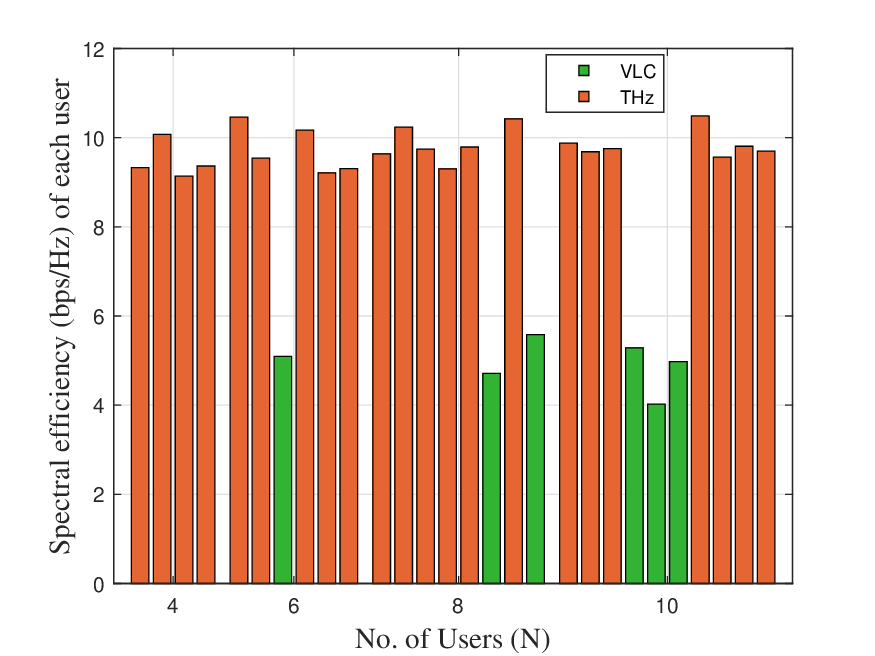}
    \caption{$SE$ Vs. $N$ $w/o$ blockages.}
    \label{fig:3c}
  \end{subfigure}\hfill
  \begin{subfigure}[t]{0.24\textwidth}
    \centering    \includegraphics[width=1.1\linewidth,height=1\linewidth]{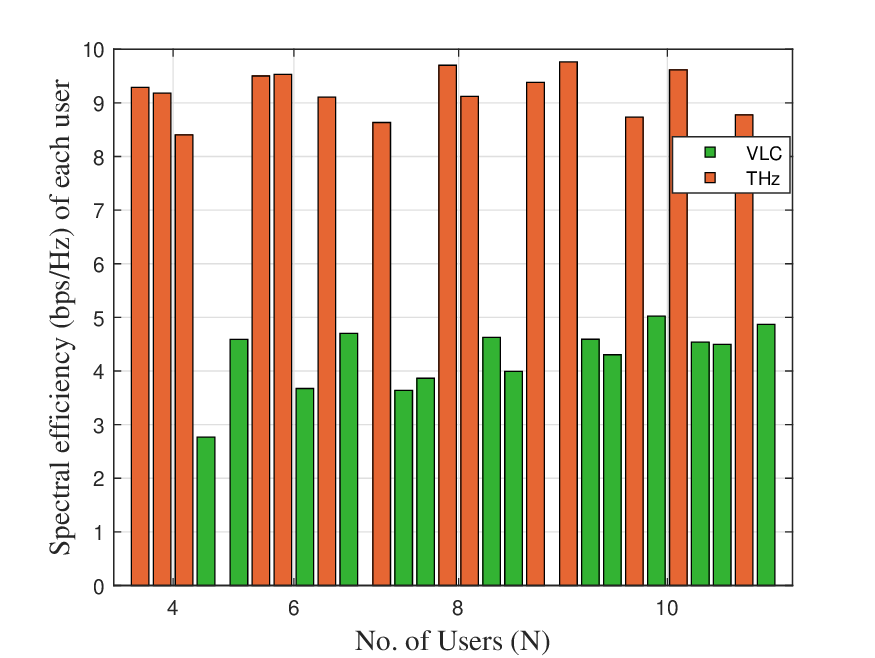}
    \caption{$SE$ Vs. $N$ with blockages.}
    \label{fig:3d}
  \end{subfigure} 
  \caption{\textcolor{blue}{Comparison analysis with probability of detection and spectral efficiency of each user in the presence and absence of human blockages.}}
  \label{fig:all}
\end{figure*}
\begin{figure}[htbp]
   \begin{minipage}[t]{0.45\textwidth}
    \centering  \includegraphics[width=1.05\linewidth,height=0.78\linewidth]{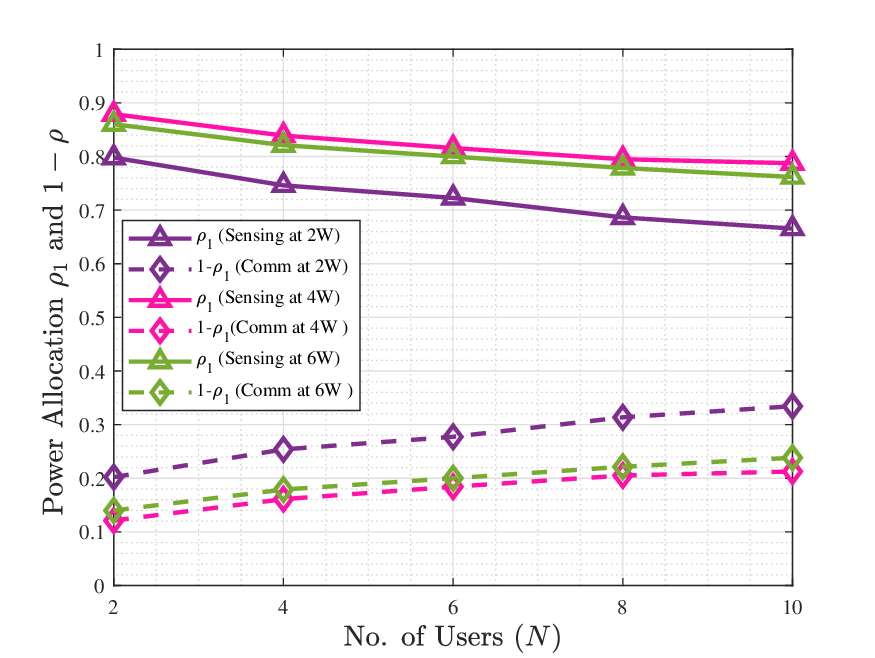}
    \caption{Power allocation between \ac{$THz_s-AP$} \textcolor{blue}{$(\rho_{1})$} and \ac{$THz_c-AP$} \textcolor{blue}{$(1-\rho_{1})$ with respect to users.}}
    \label{fig:5}
    \end{minipage}\hfill
    \begin{minipage}[t]{0.45\textwidth}
     \centering  \includegraphics[width=1.05\linewidth,height=0.78\linewidth]{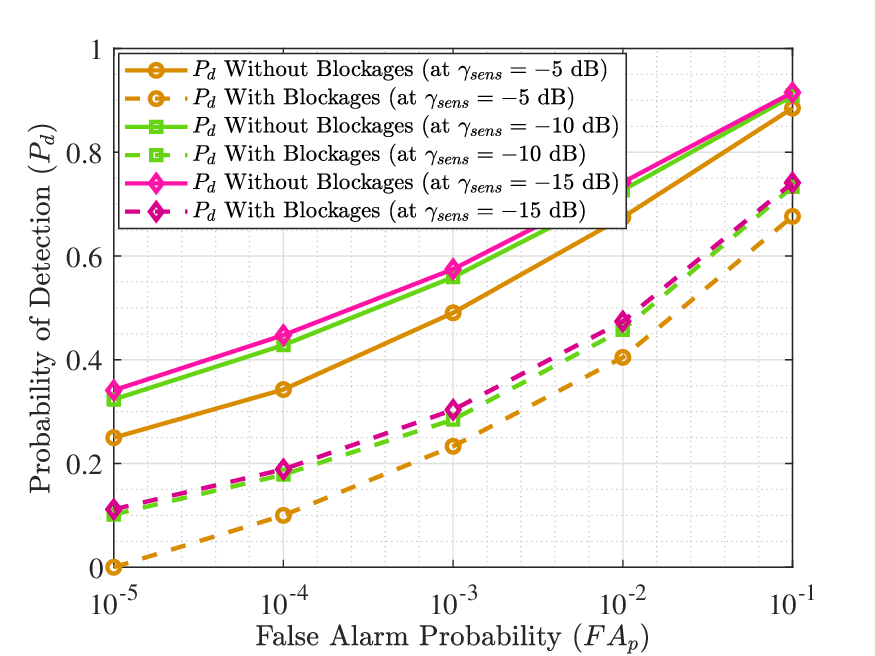}
    \caption{Probability of detection \textcolor{blue}{total $N = 10$} users with respect to the probability of false alarm in the presence and absence of blockages with different $(\gamma_{sens})$.}
    \label{fig:6}
    \end{minipage}
\end{figure}

Figs.3(a) and 3(b) present the \ac{$P_d$} achieved by each user within the sensing coverage region with and without blockages. A detection threshold of $P^{th}_{d}$
 $= 0.5$ is employed as the criterion for access point association, ensuring a minimum level of reliable user detection. \textcolor{blue}{If the $n^{th}$ user with {$P_{d(n)}^{(B)}$}} $> 0.5$ are associated with the \ac{$THz_c-AP$}, thereby benefiting from the high-resolution sensing capabilities of the THz band. In contrast, users with \textcolor{blue}{{$P_{d(n)}^{(B)}$}}$\leq 0.5$ are offloaded to the \ac{$VLC_c-APs$} that provides the highest \ac{SNR}, thus maintaining satisfactory communication performance despite reduced sensing reliability. The attenuation and scattering introduced by blockages lead to a noticeable reduction in the $P_d$ for several users. This degradation in sensing reliability restricts the users that can be effectively supported by the \ac{$THz_c-AP$}. 

Figs. 3(c) and 3(d) illustrate the spectral efficiency achieved by each user in the absence and presence of human blockages. User association with access points is primarily governed by the \ac{$P_d$}. Since the \ac{$THz_c-AP$} provides substantially higher \ac{SE} compared to the \ac{$VLC_c-APs$}, the overall system performance improves when a larger fraction of users are associated with the \ac{$THz_c-AP$}, as long as an adequate \ac{SNR} is maintained for all users. This outcome underscores a key comparative insight in the absence of blockages, reliable sensing and high \ac{$P_d$} allow most users to benefit from superior \ac{$THz_c-AP$} links, whereas under blockages, more users are forced to rely on \ac{$VLC_c-APs$} links, resulting in degraded \ac{SE}.
\begin{figure}[htbp]
   \begin{minipage}[t]{0.45\textwidth}
    \centering  \includegraphics[width=1.05\linewidth,height=0.78\linewidth]{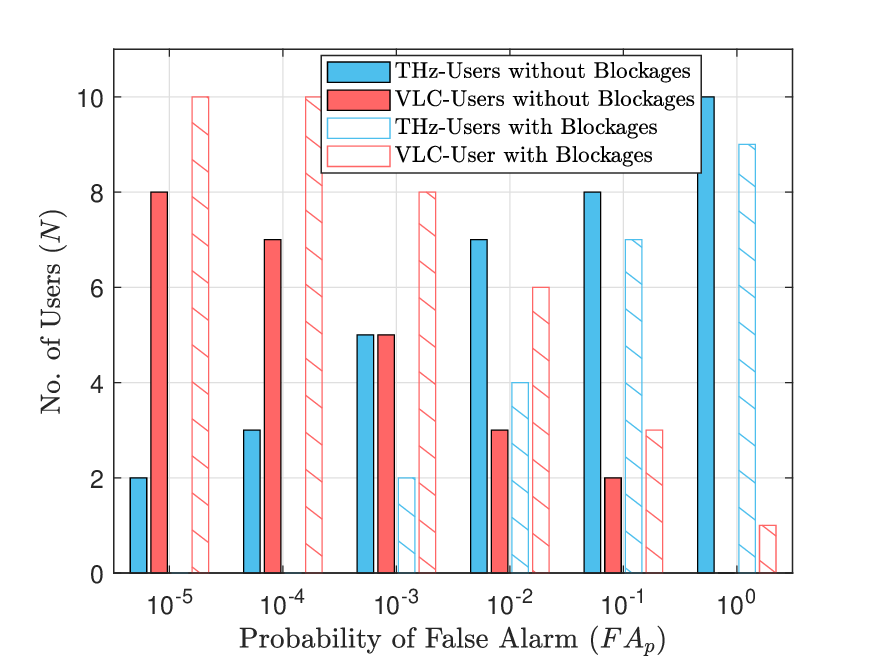}
    \caption{Number of users association from \ac{$THz_c/VLC_c-AP$} with respect to the probability of false alarm \textcolor{blue}{in} the absence and presence of blockages.}
    \label{fig:7}
    \end{minipage}\hfill
    \begin{minipage}[t]{0.45\textwidth}
     \centering    \includegraphics[width=1.05\linewidth,height=0.78\linewidth]{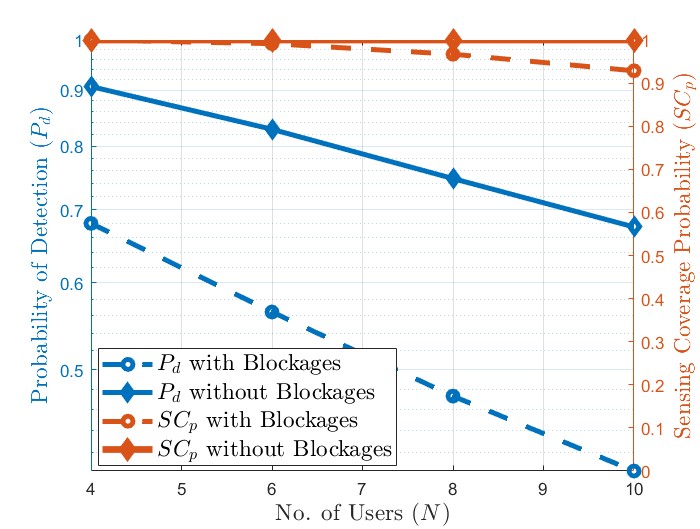}
    \caption{Probability of detection and sensing coverage probability with respect to the number of users \textcolor{blue}{in the absence and presence of human blockages.}}
    \label{fig:8}
    \end{minipage}
\end{figure}

Fig.~\ref{fig:5} illustrates the power allocation factor $\rho_1$ between the sensing \ac{$THz_s-AP$} and the communication \ac{$THz_c-AP$}. The performance varies with different power splits, highlighting the inherent trade-off between power allocation for \ac{THz} sensing and communication. Since the system operates under a fixed total transmit power of $2$~W \textcolor{blue}{with addition circuitry power of 5.6 mW \cite{nguyen2025energy}}, any increase in the fraction $\rho_1$ allocated to the sensing \ac{$THz_s-AP$} inevitably reduces the remaining fraction $1-\rho_1$ available for the communication \ac{$THz_c-AP$}, and vice versa. We also present the results for transmit powers of $P_{\text{w}} = 4$~W and $6$~W \textcolor{blue}{with circuitry power}, where a clear trade-off emerges between THz sensing and communication. This trade-off stems from the fixed total power constraint, where the sensing and communication powers jointly equal $P_{\text{w}}$. Thus, increasing the power allocated to sensing reduces the communication \ac{SNR}, whereas decreasing the sensing power enhances communication performance.

Another key parameter influencing sensing performance is the \ac{$FA_p$}. In Fig.~\ref{fig:6}, we evaluate the \ac{$P_d$} analysis in the absence and presence of blockages, where \ac{$P_d$} is a function of \ac{$FA_p$} under different sensing $SNR$ for N = 10. The results show that \ac{$P_d$} increases with \ac{$FA_p$}, indicating a direct relationship between the two because a sensing detector declares a signal present when its measurement exceeds a threshold. Decreasing the \ac{$FA_p$} from $ 10^ {0} $ to $ 10^ {-5} $ results in more users connecting to the \ac{$VLC_c-APs$}. Conversely, increasing the \ac{$FA_p$}, the more the \ac{$P_d$} is used to detect the true signals. Moreover, the achievable \ac{$P_d$} and \ac{$FA_p$} are significantly influenced by the sensing $SNR$, with higher sensing $SNR$ enabling improved detection performance and \ac{$FA_p$}.

Fig.~\ref{fig:7} illustrates user connectivity to the \ac{$THz_c-AP$} and \ac{$VLC_c-APs$} under both blockage and non-blockage scenarios, evaluated with respect to the \ac{$FA_p$}. In the absence of blockages, an increase in \ac{$FA_p$} results in more users being connected to the \ac{$THz_c-AP$}, while fewer users remain associated with the \ac{$VLC_c-APs$}. However, in the presence of blockages, the situation reverses: a larger fraction of users connect to the \ac{$VLC_c-APs$}, since blockages severely degrade the \ac{$THz_s-AP$} link quality and sensing performance. Consequently, as \ac{$FA_p$} increases under blockage conditions, the users connected through the \ac{$VLC_c-APs$} grows, highlighting the strong interplay between the blockage effects, sensing reliability, and user association.

Fig.~\ref{fig:8} depicts the relationship between the $P_d$ and the $SC_p$ of the users. It can be observed that not all sensed users at $SC_p$ = $1$  are successfully detected \ac{$P_d$} = $0.69$ at $N = 10$, primarily due to the influence of \ac{$FA_p$} $= 10^{-2}$. As a result, while $SC_p$ may approach 100\%, the effective $P_d$ decreases as the number of users increases. Moreover, $SC_p$ is more susceptible to degradation in the presence of blockages, as $SC_p$ = $0.92$ where \ac{$P_d$} = $0$ at $N = 10$, with a higher number of users. Correspondingly, $P_d$ experiences an even sharper reduction under blockage conditions, highlighting the compounded impact of user density and blockages on sensing reliability.
\begin{figure}[htbp]
   \begin{minipage}[t]{0.45\textwidth}
    \centering  \includegraphics[width=1.05\linewidth,height=0.78\linewidth]{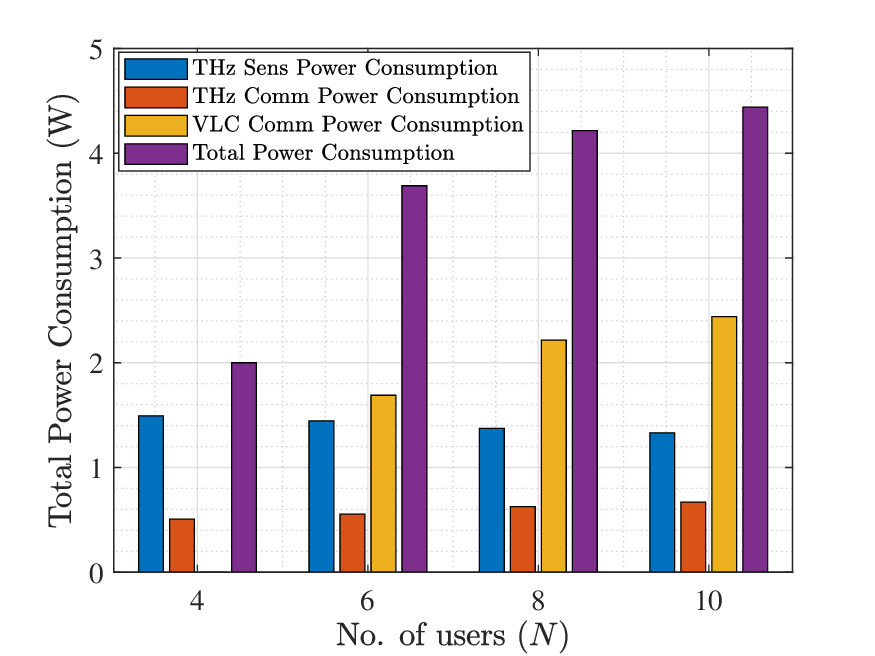}
    \caption{Power consumption \textcolor{blue}{of the \ac{$THz_s-AP$}, \ac{$THz_c-AP$}, and \ac{$VLC_c-APs$} with respect to} number of users.}
    \label{fig:9}
    \end{minipage}\hfill
    \begin{minipage}[t]{0.45\textwidth}
     \centering    \includegraphics[width=1.05\linewidth,height=0.78\linewidth]{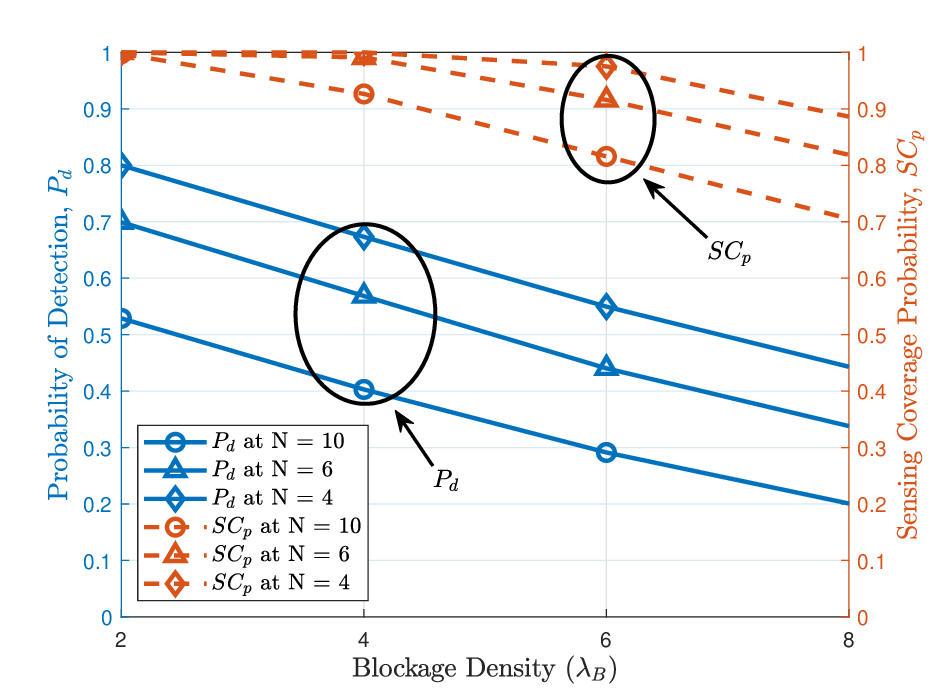}
    \caption{Probability of detection and sensing coverage probability \textcolor{blue}{of different number of users} with respect to the blockage density.}
    \label{fig:10}
    \end{minipage}
\end{figure}

Fig.~\ref{fig:9} illustrates the total power consumption of the proposed system, obtained due to the MILP optimization. A considerable reduction in overall power consumption in the system, which is achieved by solving the MILP optimization, where the transmit powers of the \ac{$THz_s-AP$} and \ac{$THz_c-AP$} are optimized first, followed by adaptive activation of the \ac{$VLC_c-AP$}, each constrained by a maximum transmit power of $5$~W. Only the \ac{$VLC_c-APs$} serving associated users remain active, while the others are switched off to eliminate unnecessary energy usage. This adaptive strategy ensures energy-efficient operation while maintaining user \ac{QoS}. Nevertheless, as the number of users increases, additional \ac{$VLC_c-APs$} become active, thereby increasing the system's power consumption from $2\,\mathrm{W}$ to $4.3\,\mathrm{W}$.

Fig.~\ref{fig:10} illustrates how the \ac{$P_d$} and the $SC_p$ are affected by human blockages. As the density of human blockages increases from 2 to 8, the  \ac{$P_d$} decreases from 0.7 to 0.35, and \ac{$SC_p$} decreases from 1 to 0.9, exhibiting a downward trend. This means that the system's ability to sense and correctly detect users becomes weaker as more obstacles are present. In particular, a higher blockage density significantly reduces the probability of correctly detecting a user, thereby lowering \ac{$P_d$}. The results are obtained while keeping the \ac{$FA_p$} fixed at $10^{-2}$, ensuring that the observed degradation is solely due to the impact of blockages.
\begin{figure}[htbp]
   \begin{minipage}[t]{0.45\textwidth}
    \centering  \includegraphics[width=1.05\linewidth,height=0.78\linewidth]{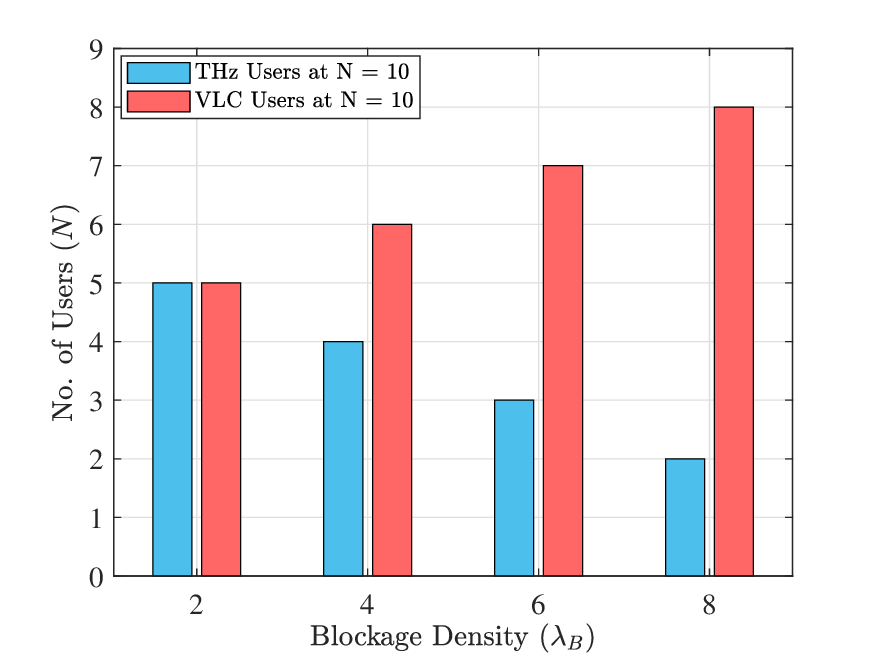}
    \caption{Number of users \textcolor{blue}{(N = 10)} association from \ac{$THz_c/VLC_c-AP$} with respect to Blockage density. }
    \label{fig:11}
    \end{minipage}\hfill
    \begin{minipage}[t]{0.45\textwidth}
     \centering    \includegraphics[width=1.05\linewidth,height=0.78\linewidth]{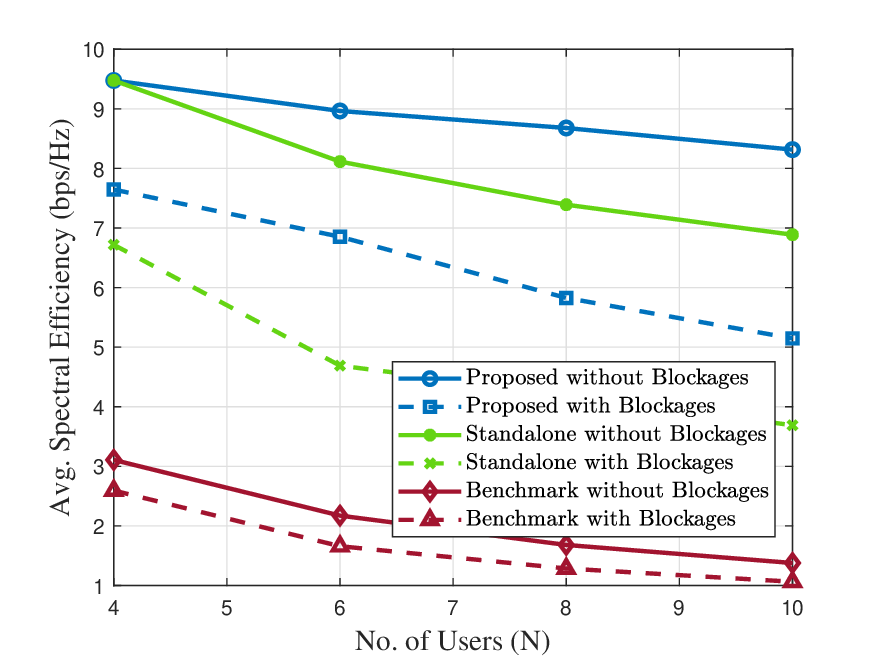}
    \caption{\textcolor{blue}{Average spectral efficiency comparison of the proposed model with the standalone and benchmark models with respect to the number of users.}}
    \label{fig:12}
    \end{minipage}
    \end{figure}
    
As shown in Fig.~\ref{fig:11}, both \ac{$P_d$} and $SC_p$ decrease with increasing blockage density, which directly affects the users ability to maintain reliable connections with the \ac{APs}. Specifically, as the blockage density increases from 2 to 8, the \ac{$P_d$} decreases and fails to connect to via \ac{$THz_c-AP$}, then the larger fraction of users tend to associate with the \ac{$VLC_c-AP$}s instead of the \ac{$THz_c-AP$}.

Fig.~\ref{fig:12} presents the average \ac{SE} comparison between the proposed, the standalone, and the benchmark model. As the $N$ increases, the standalone \ac{$THz_s-AP$} sensing and \ac{$THz_c-AP$} communication setup fails to ensure full coverage, since many users cannot maintain reliable connections with the \ac{APs} due to the high sensitivity of the THz links to blockages. In such scenarios, the \ac{$VLC_c-AP$}s play a crucial role by providing alternative connectivity, thereby maintaining the overall system performance. Consequently, the proposed model achieves significantly higher average \ac{SE} around $8.3$ bps/Hz for the $N = 10$ users than benchmarks, which stands with only $0.5$ bps/Hz. Even under blockage conditions, it effectively exploits the complementary nature of \ac{$THz_c-AP$} and \ac{$VLC_c-APs$} links for robust user association and resource utilization. Furthermore, the hybrid design ensures improved fairness across users by adaptively activating only the serving \ac{AP}, leading to more efficient power usage. The joint operation of \ac{THz} and \ac{VLC} links also enhances physical-layer security and reliability, making the system well-suited for dense indoor 6G scenarios.
\begin{figure}
    \centering \includegraphics[width=0.5\linewidth]{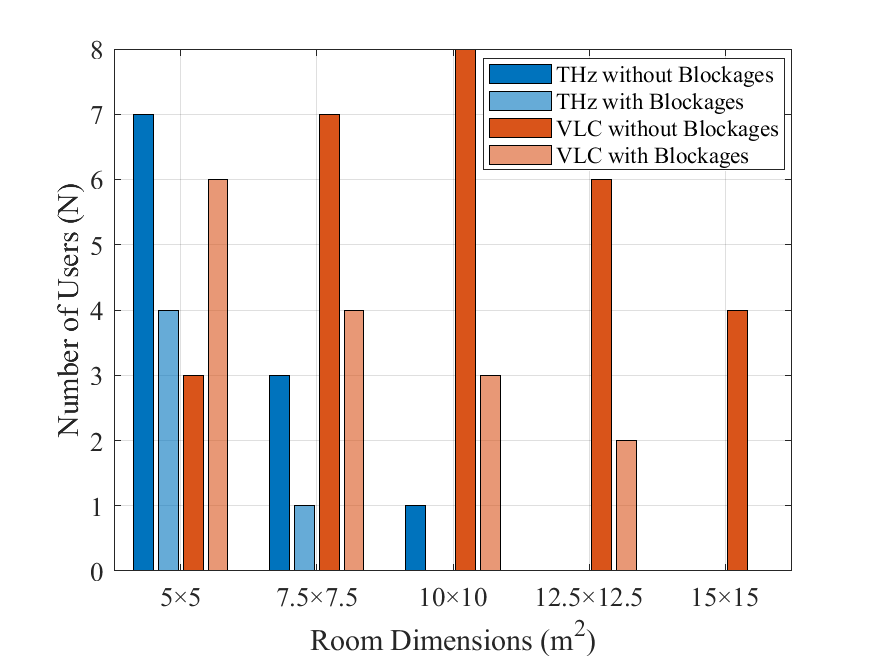}
    \caption{\textcolor{blue}{{Number of users connected to THz/VLC-APs with respect to different room dimensions in the presence and absence of human
blockages.}}}
    \label{fig:13}
\end{figure}

{Fig.~\ref{fig:13} illustrates that for $N = 10$, as the room dimensions increase, the sensing coverage of the \ac{$THz_s-AP$} decreases, resulting in fewer users being detected. Consequently, only a limited number of users can establish a communication link with the \ac{$THz_c-AP$}. For the \ac{$VLC_c-AP$}, communication is provided by the access point offering the highest received signal strength. However, as the room size increases, VLC communication coverage degrades, leading to fewer users being served. In the presence of blockage, the performance of the \ac{$THz_s-AP$} deteriorates significantly; for larger room sizes, almost no users can establish a communication link with the \ac{$THz_c-AP$}. Also under the impact of blockages, only a very small number of users can be supported by the \ac{$VLC_c-AP$}.}
\begin{figure}[htbp]
   \begin{minipage}[t]{0.45\textwidth}
    \centering  \includegraphics[width=1.05\linewidth,height=0.78\linewidth]{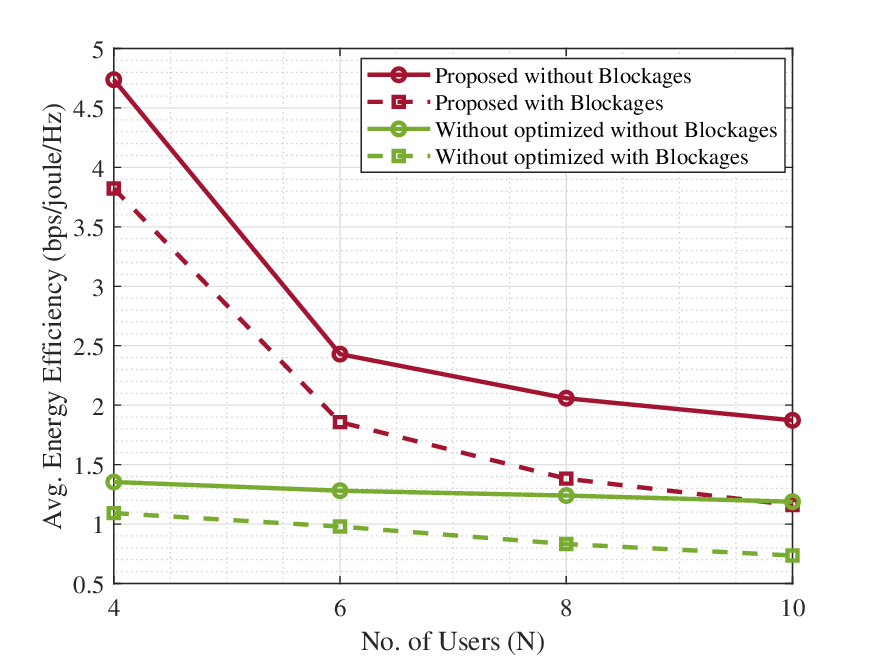}
    \caption{\textcolor{blue}{Proposed model comparison optimized for average energy efficiency with non-optimized respect to the number of users in the presence and absence of human blockages.}}
    \label{fig:14}
    \end{minipage}\hfill
    \begin{minipage}[t]{0.45\textwidth}
     \centering    \includegraphics[width=1.05\linewidth,height=0.78\linewidth]{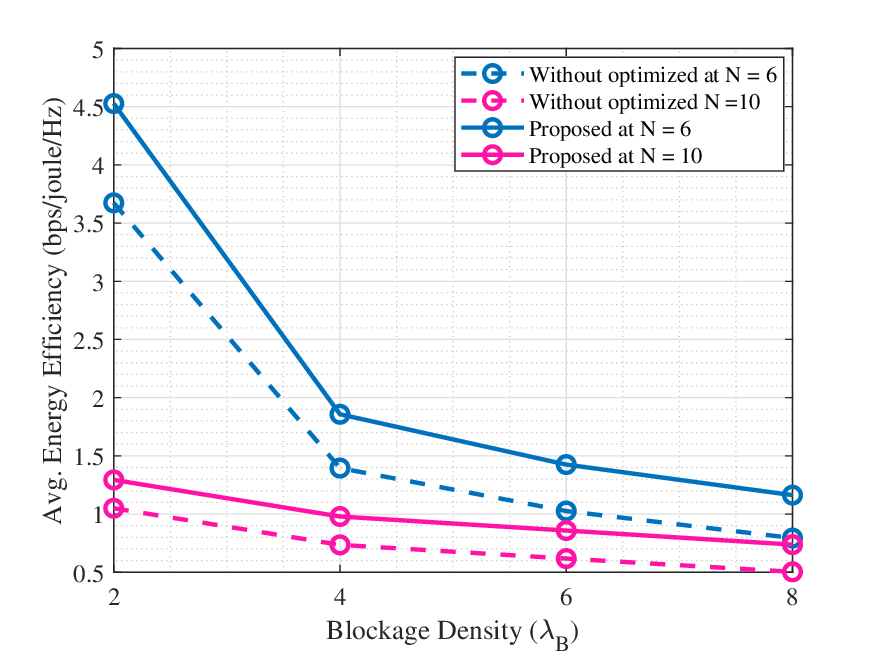}
    \caption{\textcolor{blue}{Proposed model comparison of optimized for average energy efficiency with non-optimized for different numbers of users with respect to the blockage density.}} 
    \label{fig:15}
    \end{minipage}
    \end{figure}
\begin{figure}[htbp]
   \begin{minipage}[t]{0.45\textwidth}
    \centering  \includegraphics[width=1.05\linewidth,height=0.78\linewidth]{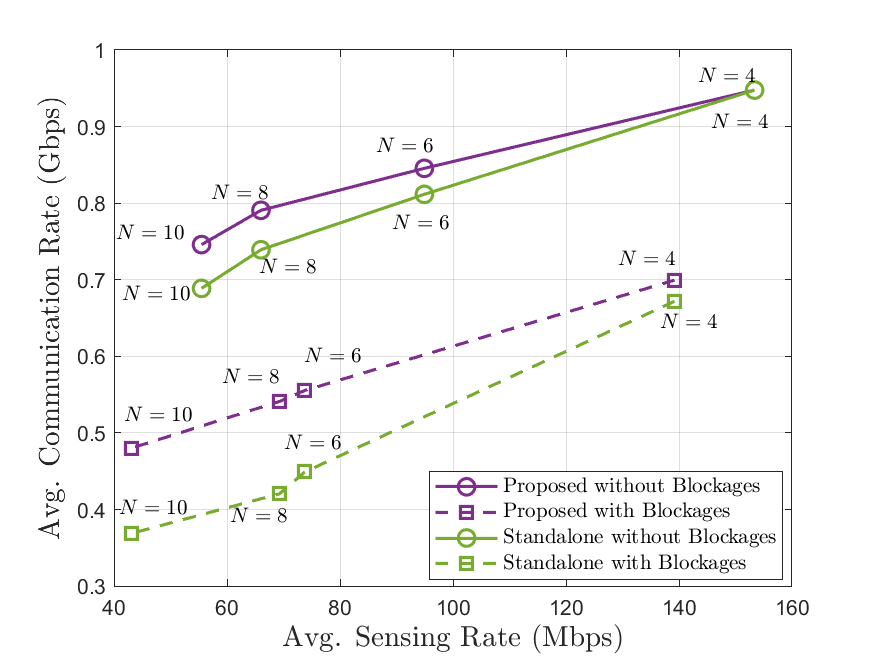}
    \caption{\textcolor{blue}{Comparison analysis of average communication rate with respect to the average sensing rate for different numbers of users in the presence and absence of human blockages.}}
    \label{fig:16}
    \end{minipage}\hfill
    \begin{minipage}[t]{0.45\textwidth}
     \centering    \includegraphics[width=1.05\linewidth,height=0.78\linewidth]{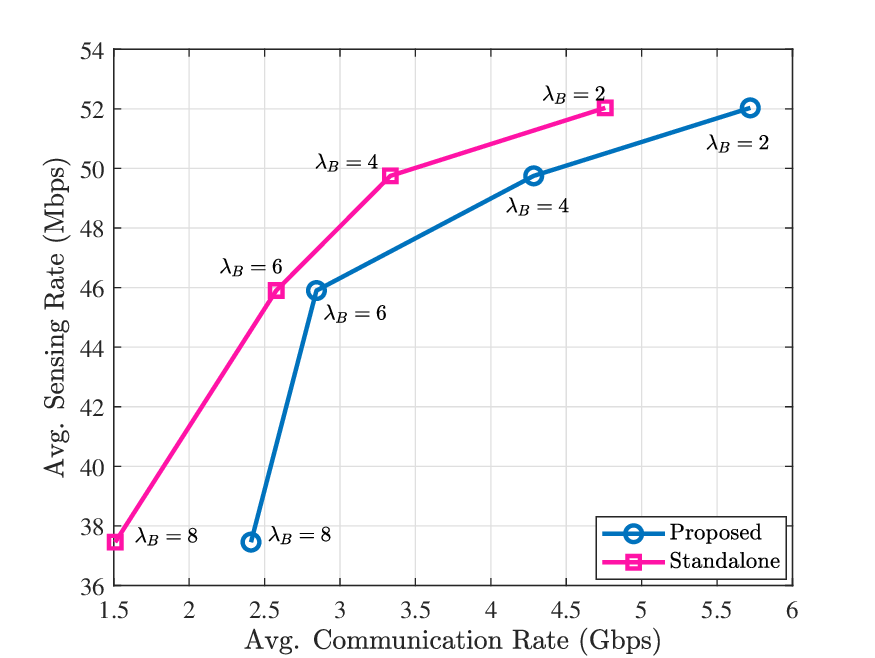}
    \caption{\textcolor{blue}{Comparison analysis of average communication rate with respect to the average sensing rate for different blockage densities.}}
    \label{fig:17}
    \end{minipage}
    \end{figure}
    
Fig.~\ref{fig:14} results demonstrate that the optimized proposed model achieves a higher \ac{EE} of 4.7 bps/joule/Hz for $N = 4$ users compared to the non-optimized baseline of 3.9 bps/joule/Hz, primarily due to its ability to operate with lower transmit power while maintaining communication quality. Moreover, by intelligently leveraging the hybrid THz/VLC architecture, the system effectively mitigates the impact of human blockages, ensuring reliable connectivity even in dynamic environments. Consequently, the proposed framework delivers superior \ac{EE} and robustness compared to conventional non-optimized schemes.

In Fig.~\ref{fig:15} it is shown that as $\lambda_B$ increases from 2 to 8, signal degradation becomes more severe, typically forcing non-optimized systems to increase their transmit power, which leads to a notable drop in \ac{EE}. In contrast, the optimized model adapts effectively to varying blockage conditions by intelligently allocating resources and avoiding unnecessary power consumption. As a result, the optimized scheme sustains higher energy efficiency and more stable performance even under dense blockage scenarios, whereas the non-optimized baseline suffers a significant efficiency decline under the same conditions.

Fig.~\ref{fig:16} shows the variation of communication and sensing rates with the number of users in the absence and presence of blockages. The proposed model is compared against a standalone scheme. As the number of users increases from $N = 4$ to $N = 10$, both sensing and communication rates decline due to resource contention and blockage effects, with the degradation becoming more severe at higher user densities where reliable \ac{LoS} links are harder to maintain. Importantly, the proposed model consistently outperforms the standalone scheme by providing higher sensing and communication rates of 100 Mbps and 0.85 Gbps, respectively, for N = 6 in the absence of blockages. Similarly, 70 Mbps and 0.55 Gbps in the presence of blockages. This gain is achieved through the joint exploitation of \ac{THz} and \ac{VLC} links, whose complementary characteristics ensure robust connectivity and improved system efficiency even in the blockage-prone conditions.

Fig.~\ref{fig:17} presents the average communication and average sensing rates for $N = 10$ users under different blockage densities $(\lambda_B)$. A trade-off exists between the two modalities due to their different distance-dependent behaviors: communication performance typically decreases proportionally to $1/d^{2}$, while sensing performance drops faster, proportional to $1/d_{f}^{4}$ given in equation \eqref{eq:7}. As a result, a reduction in sensing rate from $52$~Mbps to $38$~Mbps can correspond to a relative increase in communication rate from $2.5$~Gbps to $5.8$~Gbps. However, when $\lambda_{B}$ increases, both sensing and communication rates decline simultaneously due to reduced \ac{LoS} availability. Despite this, the proposed model consistently achieves higher communication rates than the standalone scheme by leveraging the complementary strengths of VLC and THz links to maintain robust connectivity and enhance overall efficiency, even in dense blockage scenarios. 

{\section{discussion and limitations }
The proposed framework leverages the complementary strengths of \ac{THz} and \ac{VLC} technologies to enhance indoor wireless performance. THz links support ultra-high data rates and low latency for applications, but are highly susceptible to blockage. \ac{VLC} provides robust short-range connectivity with built-in illumination, enabling reliable service continuity during THz link interruptions. The joint power-allocation and access-point-activation strategy further improves energy efficiency by leveraging existing lighting infrastructure. Overall, the proposed framework enhances spectral, energy, and reliability efficiency in blockage-prone indoor \ac{ISAC} environments.
Despite its advantages, the proposed system has few limitations. The current model assumes static users and perfect channel state information (CSI), whereas user mobility and imperfect CSI in practical deployments may affect link reliability, handover frequency, and system latency. Moreover, \ac{THz} and \ac{VLC} performance is inherently limited by \ac{LoS} requirements, ambient light interference, and user orientation, which can restrict coverage and achievable data rates. Finally, multi-user interference scenarios are not considered. Extending the framework to dynamic multi-user and multi-cell scenarios with adaptive resource allocation is left for future work.}

\section{Conclusion}
{In this paper, we address the problem of power consumption reduction in indoor wireless access networks by considering a hybrid system comprising a \ac{$THz_s-AP$} for user sensing and \ac{$THz_c-AP$}/{$VLC_c-APs$} for communication under the impact of human blockages. Unlike prior studies, we exploit \ac{THz}-based user sensing integrated with hybrid \ac{THz}/\ac{VLC} communication and employ MILP-based power optimization to significantly reduce overall power consumption. The proposed framework jointly integrates \ac{THz} sensing and hybrid \ac{THz}/\ac{VLC} communication access technologies and formulates a power minimization problem that satisfies user communication requirements in the presence of human blockages. Furthermore, a comprehensive performance evaluation of the system energy efficiency (\ac{EE}) is presented. We also analyze the trade-off between communication and sensing rates across various scenarios, including standalone operation with and without blockages, as well as the proposed hybrid configuration. Finally, the impact of varying blockage densities on this trade-off is investigated to provide deeper insights into practical system design. Our simulation results demonstrate that the proposed system significantly reduces power consumption and enhances spectral efficiency (SE) to 9.5 bps/Hz, compared to the conventional benchmark of 3 bps/Hz without blockage. Furthermore, the energy efficiency (EE) improves by 4.7 bps/J/Hz in the absence of blockage and by 3.8 bps/J/Hz in its presence.}

For future research, we aim to examine diverse channel models and explore application-oriented scenarios of \ac{THz}-based sensing integrated with hybrid communication architectures. Furthermore, by incorporating machine learning (ML) techniques for blockage prediction, the system can proactively adapt user association and resource allocation to maintain reliable connectivity. This will offer deeper insights into the practical deployment and performance trade-offs of such systems.
\bibliographystyle{IEEEtran}
\bibliography{reff}
\end{document}